\theoremstyle{plain}
\newtheorem{thm}{\protect\theoremname}
\theoremstyle{definition}
\newtheorem{defn}[thm]{\protect\definitionname}
\theoremstyle{plain}
\newtheorem{prop}[thm]{\protect\propositionname}
\theoremstyle{remark}
\newtheorem{rem}[thm]{\protect\remarkname}
\theoremstyle{plain}
\newtheorem{lem}[thm]{\protect\lemmaname}
\theoremstyle{plain}
\newtheorem{cor}[thm]{\protect\corollaryname}
\providecommand{\corollaryname}{Corollary}
\providecommand{\definitionname}{Definition}
\providecommand{\lemmaname}{Lemma}
\providecommand{\propositionname}{Proposition}
\providecommand{\remarkname}{Remark}
\providecommand{\theoremname}{Theorem}
\begin{document}

\title{Technical Report: Observability of a Linear System under Sparsity
Constraints}

\author{Wei Dai and Serdar Yüksel}
\maketitle
\begin{abstract}
Consider an $n-$dimensional linear system where it is known that
there are at most $k<n$ non-zero components in the initial state.
The observability problem, that is the recovery of the initial state,
for such a system is considered. We obtain sufficient conditions on
the number of the available observations to be able to recover the
initial state exactly for such a system. Both deterministic and stochastic
setups are considered for system dynamics. In the former setting,
the system matrices are known deterministically, whereas in the latter
setting, all of the matrices are picked from a randomized class of
matrices. The main message is that, one does not need to obtain full
$n$ observations to be able to uniquely identify the initial state
of the linear system, even when the observations are picked randomly,
when the initial condition is known to be sparse.\renewcommand{\thefootnote}{\fnsymbol{footnote}} 
\setcounter{footnote}{-1}%
\footnote{Dr. Wei Dai is with Department of Electrical and Electronic Engineering,
Imperial College London, London SW7 2AZ, United Kingdom. Email: wei.dai1@imperial.ac.uk

Prof. Serdar Yüksel is with Mathematics and Engineering Program, Department
of Mathematics and Statistics, Queen's University, Kingston, Ontario,
Canada, K7L 3N6. Email: yuksel@mast.queensu.ca. Research is supported
in part by the Natural Sciences and Engineering Research Council of
Canada (NSERC) in the form of a Discovery Grant.%
}\renewcommand{\thefootnote}{\arabic{footnote}}
\setcounter{footnote}{0}
\end{abstract}

\section{\label{sec:Introduction}Introduction}

A linear system of dimension $n$ is said to be observable if an ensemble
of at most $n$ successive observations guarantee the recovery of
the initial state. Observability is an essential notion in control
theory as, with the sister notion of controllability, these form the
essence of modern linear control theory.

In this paper, we consider the observability problem when the number
of non-zeros in the initial state in a linear system is strictly less
than the dimension of the system. This might arise in systems where
natural or external forces give rise to a certain subset of components
of a linear system to be activated or excited, for example an external
force may give rise to a subset of locally unstable states while keeping
certain other states intact.

Furthermore, with the increasing emphasis on networked control systems,
it has been realized that the controllability and observability concepts
for linear systems with controllers having full access to sensory
information is not practical. Many research efforts have focused on
both stochastic settings, as well as information theoretic settings
to adapt the observability notion to control of linear systems with
limited information. One direction in this general field is the case
when the observations available at a controller comes at random intervals.
In this context, in both the information theory literature as well
as automatic control literature, a rich collection of papers have
studied the recursive estimation problem and its applications in remote
control \cite{LinearJump1,LinearJump2,LinearJump3,LinearJump4}.

In the following, we describe the system model. In Section \ref{sec:Preliminaries},
preliminaries on compressive sensing theory are presented. It follows
a formal discussion of observability of linear systems: since the
analytical tools and results are significantly different for different
cases, we first treat a deterministic setup in Section \ref{Sec:Deterministic}
and then study a stochastic setup in Section \ref{Sec:Stochastic}.
Detailed proofs are given in Section \ref{sec:Proofs}. Concluding
remarks are discussed in Section \ref{sec:Conclusion}.

\section{\label{sec:Problem-Formulation}Problem Formulation}

For the purpose of observability analysis, we consider the following
discrete-time linear time-invariant system (with zero control input):
$\bm{x}_{t+1}=\bm{Ax}_{t}$, $\bm{y}_{t}=\eta_{t}\bm{C}\bm{x}_{t}$,
where $t\in\mathbb{Z}_{+}$ denotes the discrete time instant, $\bm{x}_{t}\in\mathbb{R}^{n}$
and $\bm{y}_{t}\in\mathbb{R}^{d_{y}}$ are the state of the system
and the observation of the system respectively, the matrices $\bm{A}\in\mathbb{R}^{n\times n}$
and $\bm{C}\in\mathbb{R}^{d_{y}\times n}$ denote the state transfer
matrix and the observation matrix respectively, and $\eta_{t}$ takes
value either $0$ or $1$ ($\eta_{t}=1$ means an observation at time
$t$ is available, and $\eta=0$ otherwise).

The problem we are interested in is the observability of a system
with a sparse initial state: Given $m<n$ observations ($m$ instances
where $\eta_{t}=1$), can we reconstruct the initial state $\bm{x}_{0}\in\mathbb{R}^{n}$
exactly? Suppose that the receiver observes the output of the system
$\bm{y}_{t}$ at the (stopping) time instances $t_{1},t_{2},\cdots,t_{m}$.
Let the overall observation matrix be the stacked observation matrices
$\bm{O}_{\bm{T}_{m}}=\left[\left(\bm{C}\bm{A}^{t_{1}}\right)^{T},\left(\bm{C}\bm{A}^{t_{2}}\right)^{T},\cdots,\left(\bm{CA}^{t_{m}}\right)^{T}\right]^{T}$
and the overall observation be $\bm{y}_{\bm{T}_{m}}=\left[\bm{y}_{t_{1}}^{T},\bm{y}_{t_{2}}^{T},\cdots,\bm{y}_{t_{m}}^{T}\right]^{T},$
where the subscript $\bm{T}_{m}$ emphasizes that only the observations
at time instants $\bm{T}_{m}:=\{t_{1},t_{2},\cdots,t_{m}\}$ are available.
Then $\bm{y}_{\bm{T}_{m}}=\bm{O}_{\bm{T}_{m}}\bm{x}_{0}$. In order
to infer the initial state $\bm{x}_{0}$ from $\bm{y}_{\bm{T}_{m}}$,
the columns of $\bm{O}_{\bm{T}_{m}}$ have to be linearly independent,
or equivalently, the null-space of the matrix $\bm{O}_{\bm{T}_{m}}$
must be trivial.

While the general setup has been well understood, the problem of our
particular interest is the observability when the initial state $\bm{x}_{0}$
is sparse. The definition of a sparse vector is given as follows. 
\begin{defn}
Let $\bm{B}\in\mathbb{R}^{n\times n}$ be an orthonormal basis, i.e.,
$\bm{B}$ contains $n$ orthonormal columns. A vector $\bm{x}\in\mathbb{R}^{n}$
is $K$-sparse under $\bm{B}\in\mathbb{R}^{n\times n}$ if $\bm{x}=\bm{B}\bm{s}$
for some $\bm{s}\in\mathbb{R}^{n}$ with $\left\Vert \bm{s}\right\Vert _{0}\le K$,
where $||\bm{s}||_{0}$ gives the number of non-zero components in
the vector $\bm{s}$ ($\left\Vert \cdot\right\Vert _{0}$ is often
referred to as the $\ell_{0}$-norm, even though it is not a well-defined
norm). 
\end{defn}
\vspace{0cm}

Our formulation appears to be new in the control theory literature,
except for a paper \cite{WakinCDC2010} which considers a similar
setting for observability properties of a stochastic model to be considered
later in the paper. The differences between the approaches in the
stochastic setup are presented in Section \ref{Sec:Stochastic}. Another
related work is \cite{BasarACC2011} which designs control algorithms
based on sparsity in the state, where compressive sensing tools are
used to reconstruct the state for control purposes.

\section{\label{sec:Preliminaries}Preliminaries and Compressive Sensing }

Compressive sensing is a signal processing technique that encodes
a signal $\bm{x}$ of dimension $n$ by computing a measurement vector
$\bm{y}$ of dimension $m\ll n$ via linear projections, i.e., $\bm{y}=\bm{\Phi}\bm{x},$
where $\bm{\Phi}\in\mathbb{R}^{m\times n}$ is referred to as the
\emph{measurement matrix}. In general, it is not possible to uniquely
recover the unknown signal $\bm{x}$ using measurements $\bm{y}$
with reduced-dimensionality. Nevertheless, if the input signal is
sufficiently sparse, exact reconstruction is possible. In this context,
suppose that the unknown signal $\bm{x}\in\mathbb{R}^{n}$ is at most
$K$-sparse, i.e., that there are at most $K$ nonzero entries in
$\bm{x}$. A naive reconstruction method is to search among all possible
signals and find the sparsest one which is consistent with the linear
measurements. This method requires only $m=2K$ random linear measurements,
but finding the sparsest signal representation is an NP-hard problem.
On the other hand, Donoho and Candès et. al. \cite{Donoho_IT2006_CompressedSensing,Candes_Tao_IT2005_decoding_linear_programming}
demonstrated that reconstruction of $\bm{x}$ from $\bm{y}$ is a
\emph{polynomial time} problem if more measurements are taken. This
is achieved by casting the reconstruction problem as an \emph{$\ell_{1}$-minimization}
problem, i.e., $\min\;\left\Vert \bm{x}\right\Vert _{1}\;\mathrm{subject\; to}\;\bm{y}=\bm{\Phi}\bm{x},$
where $\left\Vert \bm{x}\right\Vert _{1}=\sum_{i=1}^{n}\left|x^{i}\right|$
denotes the $\ell_{1}$-norm of the vector $\bm{x}$. It is a convex
optimization problem and can be solved efficiently by linear programming
(LP) techniques. The reconstruction complexity equals $O\left(m^{2}n^{3/2}\right)$
if the convex optimization problem is solved using interior point
methods \cite{Nesterov_book1994_Interior_point_Convex_Programming}.
More recently, an iterative algorithm, termed \emph{subspace pursuit
(SP)}, was proposed independently in \cite{Dai_2008_Subspace_Pursuit}
and \cite{Tropp2008_CoSamp}. The corresponding computational complexity
is $O\left(Km(n+K^{2})\right)$, which is significantly smaller than
that of $\ell_{1}$-minimization when $K\ll n$.

A sufficient and necessary condition for $\ell_{1}$-minimization
to perform exact reconstruction is the so called \emph{the null-space
condition} \cite{Xu2010:CSGrassmannManifold}. 
\begin{thm}
If and only if for all $\bm{w}\in\mathbb{R}^{n}$ such that $\bm{\Phi}\bm{w}=\bm{0},$
and for all sets $T\subset\left\{ 1,2,\cdots,n\right\} $ such that
$\left|T\right|=K$, there exists a constant $c>1$ such that 
\begin{equation}
c\sum_{i\in T}\left|\bm{w}^{i}\right|\le\sum_{j\in T^{c}}\left|\bm{w}^{j}\right|,\label{eq:condition-xu}
\end{equation}
 where $T^{c}=\left\{ 1,2,\cdots,n\right\} -T$, then $\ell_{1}$-minimization
reconstructs $\bm{x}$ exactly. 
\end{thm}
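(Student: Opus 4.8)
The plan is to establish the two implications of the ``if and only if'' separately; both are elementary manipulations built on the triangle inequality, and no compressive-sensing machinery beyond \eqref{eq:condition-xu} itself is required.

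\emph{Sufficiency.} Assume \eqref{eq:condition-xu} holds. Fix a $K$-sparse signal $\bm x$, let $S$ be its support, and enlarge $S$ to a set $T$ with $S\subseteq T$ and $|T|=K$, so that $\bm x$ is supported on $T$. Let $\hat{\bm x}$ be any minimizer of $\|\bm z\|_1$ subject to $\bm\Phi\bm z=\bm\Phi\bm x=\bm y$, and put $\bm w=\hat{\bm x}-\bm x$, so that $\bm\Phi\bm w=\bm 0$. The first step is to combine optimality, $\|\hat{\bm x}\|_1\le\|\bm x\|_1$, with the reverse triangle inequality on $T$ and the fact that $\bm x$ vanishes on $T^c$:
\[
\|\bm x\|_1\ \ge\ \|\bm x+\bm w\|_1\ =\ \sum_{i\in T}|x^i+w^i|+\sum_{j\in T^c}|w^j|\ \ge\ \|\bm x\|_1-\sum_{i\in T}|w^i|+\sum_{j\in T^c}|w^j|,
\]
which forces $\sum_{j\in T^c}|w^j|\le\sum_{i\in T}|w^i|$. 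Plugging this same $\bm w$ and $T$ into \eqref{eq:condition-xu} produces a constant $c>1$ with $c\sum_{i\in T}|w^i|\le\sum_{j\in T^c}|w^j|\le\sum_{i\in T}|w^i|$, hence $(c-1)\sum_{i\in T}|w^i|\le0$; since $c>1$ this gives $\sum_{i\in T}|w^i|=0$, and the previous inequality then gives $\sum_{j\in T^c}|w^j|=0$ as well. Therefore $\bm w=\bm 0$, i.e. $\hat{\bm x}=\bm x$, which is exact (and unique) recovery.

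\emph{Necessity.} I would prove the contrapositive: suppose \eqref{eq:condition-xu} fails, i.e. there exist $\bm w$ with $\bm\Phi\bm w=\bm0$ and a set $T$ with $|T|=K$ such that no $c>1$ satisfies the inequality. If $\sum_{i\in T}|w^i|=0$, the negated statement would require $0>\sum_{j\in T^c}|w^j|\ge0$, which is impossible; hence $\sum_{i\in T}|w^i|>0$, and letting $c\downarrow1$ in the negated inequality yields $\sum_{i\in T}|w^i|\ge\sum_{j\in T^c}|w^j|$. Now define the $K$-sparse vector $\bm x$ by $x^i=w^i$ for $i\in T$ and $x^j=0$ for $j\in T^c$; then $\bm x\ne\bm0$, the point $\bm x-\bm w$ is feasible since $\bm\Phi(\bm x-\bm w)=\bm\Phi\bm x$, and
\[
\|\bm x-\bm w\|_1\ =\ \sum_{j\in T^c}|w^j|\ \le\ \sum_{i\in T}|w^i|\ =\ \|\bm x\|_1 .
\]
Since $\bm x-\bm w\ne\bm x$, the signal $\bm x$ is not the unique $\ell_1$-minimizer consistent with $\bm\Phi\bm x$ (either $\bm x-\bm w$ is strictly better, or there is a tie), so $\ell_1$-minimization fails to reconstruct this $K$-sparse $\bm x$ exactly.

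Both arguments are short; the real care points are organizational rather than technical. The one in the sufficiency part is embedding the possibly smaller support of $\bm x$ into a size-$K$ set $T$ so that \eqref{eq:condition-xu} applies verbatim. The trickiest point, and the one I would be most careful about, is in the necessity direction: correctly negating the ``there exists $c>1$'' quantifier, and in particular disposing of the degenerate case $\sum_{i\in T}|w^i|=0$ before passing to the limit $c\downarrow1$. One may also remark that, by the homogeneity of \eqref{eq:condition-xu} and compactness of $\{\bm w:\bm\Phi\bm w=\bm0,\ \|\bm w\|_1=1\}$, the pointwise ``$\exists c>1$'' form is equivalent to one with a single uniform $c$; this is occasionally convenient but is not needed in either direction above.
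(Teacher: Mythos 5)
Your proof is correct, but note that the paper does not actually prove this theorem: it is quoted as a known result (the null-space condition) with a citation to the compressive-sensing literature, so there is no in-paper argument to compare against. Your two-directional argument is the standard one for the null-space property and it is complete. In the sufficiency direction you correctly enlarge the support of $\bm{x}$ to a set $T$ of cardinality exactly $K$ so that the hypothesis applies, use optimality of $\hat{\bm{x}}$ together with the reverse triangle inequality to get $\sum_{j\in T^{c}}|w^{j}|\le\sum_{i\in T}|w^{i}|$, and then the condition with $c>1$ forces $\bm{w}=\bm{0}$; in the necessity direction you negate the quantifiers carefully, correctly dispose of the degenerate case $\sum_{i\in T}|w^{i}|=0$, and exhibit a $K$-sparse $\bm{x}$ (namely $\bm{w}$ restricted to $T$) for which the feasible point $\bm{x}-\bm{w}\ne\bm{x}$ has $\ell_{1}$-norm no larger than $\|\bm{x}\|_{1}$, so $\bm{x}$ cannot be the unique minimizer. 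The only points worth polishing are cosmetic: you implicitly use that an $\ell_{1}$-minimizer exists (true, since $\|\cdot\|_{1}$ is coercive on the nonempty closed affine feasible set), and your closing remark about replacing the pointwise $c$ by a uniform one via homogeneity and compactness is accurate but, as you say, not needed.
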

\vspace{0cm}

A sufficient condition for both the $\ell_{1}$-minimization and SP
algorithms to perform exact reconstruction is based on the so called
\emph{restricted isometry property (RIP)} \cite{Candes_Tao_IT2005_decoding_linear_programming}.
A matrix $\bm{\Phi}\in\mathbb{R}^{m\times n}$ is said to satisfy
the Restricted Isometry Property (RIP) with coefficients $\left(K,\delta\right)$
for $K\le m$, $0\leq\delta\leq1$, if for all index sets $I\subset\left\{ 1,\cdots,n\right\} $
such that $\left|I\right|\le K$ and for all $\bm{q}\in\mathbb{R}^{\left|I\right|}$,
one has 
\[
\left(1-\delta\right)\left\Vert \bm{q}\right\Vert _{2}^{2}\le\left\Vert \bm{\Phi}_{I}\bm{q}\right\Vert _{2}^{2}\le\left(1+\delta\right)\left\Vert \bm{q}\right\Vert _{2}^{2},
\]
 where $\bm{\Phi}_{I}$ denotes the matrix formed by the columns of
$\bm{\Phi}$ with indices in $I$. The \emph{RIP parameter} $\delta_{K}$
is defined as the infimum of all parameters $\delta$ for which the
RIP holds. It was shown in \cite{Candes_Tao_IT2005_decoding_linear_programming,Candes_Tao_ApplMath2006_Stable_Signal_Recovery,Dai_2008_Subspace_Pursuit}
that both $\ell_{1}$-minimization and SP algorithms lead to exact
reconstructions of $K$-sparse signals if the matrix $\bm{\Phi}$
satisfies the RIP with a constant parameter, i.e., $\delta_{kK}\le c_{0}$
where $c_{0}\in\left(0,1\right)$ and $k\in\mathbb{R}_{+}$ are independent
of $K$. We note that different algorithms may have different parameter
values for $c_{0}$s and $k$s. Examples of random and deterministic
RIP matrices can be found in \cite{Candes_Tao_IT2006_Robust_Uncertainty_Principles,Candes_Tao_IT2005_decoding_linear_programming,RonaldA2007:CSMatrixConstruction,Mazumdar2011:RIPMatricesConstruction}.

For later use, we also consider a particular class of the measurement
matrices $\bm{\Phi}$. We will assume that $\bm{\Phi}^{T}\in\mathcal{S}_{n,m}\left(\mathbb{R}\right)$
(that is, the rows of $\bm{\Phi}\in\mathbb{R}^{m\times n}$ are orthonormal)
is isotropically distributed (the definition of $\mathcal{S}_{n,m}\left(\mathbb{R}\right)$
and the isotropic distribution on $\mathcal{S}_{n,m}\left(\mathbb{R}\right)$
will be introduced in Section \ref{sub:Stiefel-Manifold}). Under
this assumption, it has been shown in \cite{Rudelson2005_CS_error_correcting_codes}
that if the number of measurements satisfies $m\ge C\cdot K\log\left(n/K\right)$
for some positive constant $C$, then with high probability ($\ge1-e^{-nc}$
for some positive constant $c$) the $\ell_{1}$-minimization perfectly
reconstructs the input unknown signal $\bm{x}$.

\section{\label{Sec:Deterministic}The Deterministic Model}

This section characterizes the number of measurements needed for observability
for different scenarios. We assume that $\bm{x}_{0}$ is $K$-sparse
under a basis $\bm{B}\in\mathcal{S}_{n,n}\left(\mathbb{R}\right)$
and $\bm{B}$ is known in advance. Recall that observability generally
requires that the observability matrix $\bm{O}_{\bm{T}_{m}}$ has
full rank, i.e., at least $n$ measurements should be collected. When
$\bm{x}_{0}$ is sparse, the number of observations required for observability
can be significantly reduced.

We start with a special case where particular structures are imposed
on $\bm{A}$, $\bm{B}$ and $\bm{C}$ to reduce the number of required
observations to $2K+1$. 
\begin{prop}
\label{pro:observability-Vandermonde}Suppose that $\bm{x}_{0}$ is
$K$-sparse under the natural basis $\bm{B}=\bm{I}$. Assume that
$\bm{A}\in\mathbb{R}^{n\times n}$ is diagonal, and that all diagonal
entries are nonzero and distinct. Let all of the entries of $\bm{C}\in\mathbb{R}^{1\times n}$
($d_{y}=1$) be non-zero. Then $\bm{x}_{0}$ can be exactly reconstructed
after exactly $2K+1$ measurements by algorithms with polynomial complexity
in $n$. 
\end{prop}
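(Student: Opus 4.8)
The plan is to collect the observations at the first $2K+1$ instants $t=0,1,\dots,2K$ and to reduce the task to a classical Prony-type reconstruction. Write $a_{1},\dots,a_{n}$ for the diagonal entries of $\bm{A}$ (distinct and nonzero by hypothesis) and $c_{1},\dots,c_{n}$ for the entries of $\bm{C}$ (all nonzero). Since $\bm{A}$ is diagonal, $\bm{C}\bm{A}^{t}\bm{x}_{0}=\sum_{i=1}^{n}c_{i}x_{0}^{i}a_{i}^{t}$, so, setting $z_{i}:=c_{i}x_{0}^{i}$, we get
\[
\bm{y}_{t}=\sum_{i=1}^{n}z_{i}\,a_{i}^{t},\qquad t=0,1,\dots,2K .
\]
Because every $c_{i}\neq 0$, the vector $\bm{z}=(z_{1},\dots,z_{n})$ has exactly the same support as $\bm{x}_{0}$, so $\|\bm{z}\|_{0}\le K$ and recovering $\bm{x}_{0}$ is equivalent to recovering $\bm{z}$ and then setting $x_{0}^{i}=z_{i}/c_{i}$. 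In matrix form $\bm{y}=\bm{V}\bm{z}$, where $\bm{V}\in\mathbb{R}^{(2K+1)\times n}$ has the Vandermonde structure $\bm{V}_{t,i}=a_{i}^{t}$.

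I would first record uniqueness. Any $r\le 2K+1$ columns of $\bm{V}$, together with the rows indexed by $0,\dots,r-1$, form an $r\times r$ Vandermonde matrix on distinct nodes, hence are linearly independent; in particular $\ker\bm{V}$ contains no nonzero vector with at most $2K$ nonzero entries, so two $K$-sparse solutions of $\bm{y}=\bm{V}\bm{z}$ must coincide. This identifies $\bm{x}_{0}$ in principle, but only through a search over supports; the substance of the proposition is the polynomial-time reconstruction.

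For the constructive part I would invoke the annihilating-filter (Prony) method. Let $S=\mathrm{supp}(\bm{z})$, $s=|S|\le K$, and $p(x)=\prod_{i\in S}(x-a_{i})=x^{s}+\sum_{\ell=0}^{s-1}p_{\ell}x^{\ell}$. Since $p(a_{i})=0$ for each $i\in S$, the samples satisfy the order-$s$ recurrence $\sum_{\ell=0}^{s}p_{\ell}\bm{y}_{t+\ell}=\sum_{i\in S}z_{i}a_{i}^{t}p(a_{i})=0$ (with $p_{s}=1$). Form the Hankel matrix $\bm{H}\in\mathbb{R}^{(K+1)\times(K+1)}$ with $\bm{H}_{j,k}=\bm{y}_{j+k}$ for $j,k\in\{0,\dots,K\}$; this uses precisely the $2K+1$ available samples. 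The factorization $\bm{H}=\bm{W}\,\mathrm{diag}\!\big((z_{i})_{i\in S}\big)\,\bm{W}^{T}$ with $\bm{W}_{j,i}=a_{i}^{j}$ ($j\in\{0,\dots,K\}$, $i\in S$) shows $\mathrm{rank}(\bm{H})=s$, since $\bm{W}$ is a Vandermonde with full column rank $s$ and the diagonal factor is invertible ($z_{i}\neq 0$ for $i\in S$). Moreover $\bm{q}\in\ker\bm{H}$ iff the polynomial $\sum_{k}q_{k}x^{k}$ vanishes on $S$, so the monic minimal-degree annihilator read off from $\ker\bm{H}$ equals $p$. The algorithm is then: compute $\ker\bm{H}$ and extract $p$; recover $S$ by evaluating $p$ at the known nodes $a_{1},\dots,a_{n}$ and keeping its roots; solve the consistent, full-column-rank system $\big(a_{i}^{t}\big)_{t=0,\dots,2K;\,i\in S}\,(z_{i})_{i\in S}=\bm{y}$ for the amplitudes; finally output $x_{0}^{i}=z_{i}/c_{i}$. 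Every step costs $O(\mathrm{poly}(n,K))$, hence $O(\mathrm{poly}(n))$ since $K<n$.

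The main obstacle is not the linear algebra but making the extraction of $p$ rigorous when $s<K$, i.e.\ when $\bm{H}$ is rank-deficient: one must show that the least-degree monic polynomial whose coefficient vector lies in $\ker\bm{H}$ is exactly $\prod_{i\in S}(x-a_{i})$, which is precisely where the factorization of $\bm{H}$ and the distinctness of the $a_{i}$ are used, and which is also why $2K+1$ observations are needed rather than the $2K$ that already suffice for uniqueness --- a $(K+1)\times(K+1)$ Hankel matrix requires the samples $\bm{y}_{0},\dots,\bm{y}_{2K}$. The hypotheses enter exactly here: distinct nonzero diagonal entries of $\bm{A}$ yield Vandermonde nonsingularity, and nonzero entries of $\bm{C}$ guarantee both $\mathrm{supp}(\bm{z})=\mathrm{supp}(\bm{x}_{0})$ and invertibility of the diagonal factor.
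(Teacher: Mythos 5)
Your proof is correct and takes essentially the same route as the paper: reduce the problem to recovering the $K$-sparse vector $\mathrm{diag}(\bm{C})\bm{x}_{0}$ from a $(2K+1)\times n$ Vandermonde system with distinct nonzero nodes, then decode algebraically. The only difference is that the paper delegates the decoding step to the cited Reed--Solomon method of Akcakaya--Tarokh, whereas you spell out that same annihilating-filter (Prony) argument explicitly.
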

\vspace{0in}

\begin{IEEEproof}
See Section \ref{sub:Proof-of-Pro-observability-Vandermonde}. \vspace{0in}
 \end{IEEEproof}
\begin{rem}
The reconstruction relies on the Reed-Solomon decoding method presented
in \cite{Tarokh2007_ISIT_reed_solomon_CS}. Note that the reconstruction
is not robust to noise and hence not very useful in practice. 
\end{rem}
\vspace{0in}

The following proposition considers the case where $\ell_{1}$-minimization
is used for reconstruction. We have further restrictions on the initial
state and observation time. 
\begin{prop}
\label{row}Let all of the entries of $\bm{C}\in\mathbb{R}^{1\times n}$
($d_{y}=1$) be non-zero. Suppose $c_{i}x_{0,i}\geq0$ for all $i$,
where $\bm{C}=\left[c_{1},\cdots,c_{n}\right]$. Further assume that
$\bm{A}\in\mathbb{R}^{n\times n}$ is diagonal, and that all diagonal
entries are nonzero. If the decoder receives $2K+1$ successive observations
at times $t=0,\dots,2K$, the decoder can reconstruct the initial
state perfectly and the unique solution can be obtained by the solution
of the linear program $\min||\bm{x}||_{1}\;\mbox{s.t. }\;\bm{O}_{\bm{t}}\bm{x}={\bf y},$
where $\bm{O}_{\bm{t}}=\left[\bm{C}^{T},\left(\bm{C}\bm{A}\right)^{T},\cdots,\left(\bm{CA}^{2K}\right)^{T}\right]^{T}.$ \end{prop}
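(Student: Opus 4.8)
The plan is to exhibit a structured $\ell_1$ null-space certificate for the matrix $\bm{O}_{\bm{t}}$ and invoke the null-space condition (the first Theorem of the excerpt). Since $\bm{A}$ is diagonal with nonzero entries $a_1,\dots,a_n$ and $\bm{C}=[c_1,\dots,c_n]$, the rows of $\bm{O}_{\bm{t}}$ are $\bigl[c_1 a_1^j,\, c_2 a_2^j,\,\dots,\, c_n a_n^j\bigr]$ for $j=0,1,\dots,2K$. Thus $\bm{O}_{\bm{t}}\bm{w}=\bm{0}$ means the polynomial-like identity $\sum_{i=1}^n (c_i w_i)\, a_i^j = 0$ for $j=0,\dots,2K$. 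First I would handle the case where the $a_i$ need not be distinct: group indices by common value of $a_i$, so the conditions become $\sum_{i: a_i = \alpha}(c_i w_i) = 0$ for each distinct value $\alpha$ appearing, $j=0$ giving one such linear constraint but higher $j$ giving nothing new within a group; when the $a_i$ are distinct the constraints read $c_i w_i \cdot (\text{Vandermonde}) = \bm 0$... no — with $2K+1$ powers and $n$ distinct nodes one cannot conclude $c_iw_i=0$. The right move is: any nonzero $\bm w$ in the null space of $\bm O_{\bm t}$ has the property that the vector $\bm v$ with $v_i = c_i w_i$ is orthogonal to the first $2K+1$ rows of the $n\times n$ Vandermonde-type matrix $V_{ji}=a_i^j$.

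The key step is a sign/support argument exploiting the hypothesis $c_i x_{0,i}\ge 0$. I would argue by contradiction: suppose $\bm x_0$ is $K$-sparse and feasible but is not the unique $\ell_1$-minimizer; then there is a feasible $\bm x_0 + \bm w$ with $\bm w \ne \bm 0$, $\bm O_{\bm t}\bm w = \bm 0$, and $\|\bm x_0 + \bm w\|_1 \le \|\bm x_0\|_1$. Let $T = \operatorname{supp}(\bm x_0)$, $|T|\le K$. Standard manipulation gives $\sum_{j\in T^c}|w_j| \le \sum_{i\in T}|w_i|$, so the null-space condition fails only if there is a null-space vector concentrated on some $K$-set. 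I would then show no such vector exists by a counting/degree argument: writing $p(a) := \sum_i (c_i w_i) \mathbf 1[a_i = a]$, the condition $\sum_i (c_iw_i)a_i^j=0$ for $j=0,\dots,2K$ forces, via invertibility of a $(2K+1)$-node Vandermonde submatrix, that $\bm v = (c_iw_i)_i$ cannot be supported on $\le 2K+1$ distinct nodes unless it is zero — hence $\bm w$ supported on $\le 2K$ coordinates (a fortiori on $\le K$) forces $c_iw_i = 0$, and since $c_i\ne 0$, $\bm w = \bm 0$. Combined with the inequality $\sum_{j\in T^c}|w_j|\le \sum_{i\in T}|w_i|$ wait — I must be more careful here, and this is where the positivity condition $c_ix_{0,i}\ge 0$ enters: it pins down $\operatorname{sign}(x_{0,i}) = \operatorname{sign}(c_i)$, allowing the feasibility/optimality inequality to be sharpened so that $\bm w$ restricted to $T$ is also controlled, ultimately forcing $\bm w$ to have support of size at most $2K$, triggering the Vandermonde argument above.

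The main obstacle I anticipate is making the last linkage rigorous: the pure null-space argument only kills null-space vectors supported on $\le 2K$ coordinates, whereas a generic null-space vector of $\bm O_{\bm t}$ (which is $(2K+1)\times n$ with $n > 2K+1$) has full support of size $n$. The role of the sign constraint $c_i x_{0,i}\ge 0$ must be to convert "$\|\bm x_0+\bm w\|_1 \le \|\bm x_0\|_1$" into a statement forcing $w_i$ to vanish for most $i$, e.g.\ by showing $\sum_i |c_i|\,|w_i|$ telescopes against $\sum_i c_i w_i a_i^0 = 0$ so that equality in the $\ell_1$ comparison forces $c_iw_i \le 0$ everywhere (not just on $T^c$), and then $\sum_i c_iw_i = 0$ with all terms $\le 0$ forces $c_iw_i=0$ for all $i$, i.e.\ $\bm w = \bm 0$. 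I would carry out the steps in this order: (1) reduce to diagonal Vandermonde structure; (2) set up the contradiction via an assumed better feasible point; (3) use $\operatorname{sign}(x_{0,i})=\operatorname{sign}(c_i)$ to derive a global sign constraint on $(c_iw_i)$; (4) use the $j=0$ row to collapse this to $c_iw_i=0$ for all $i$; (5) conclude $\bm w=\bm 0$, hence uniqueness, and note the LP is exactly the $\ell_1$-minimization whose feasible set is nonempty ($\bm x_0$ itself). The delicate point deserving the most care is step (3)–(4), the sign propagation.
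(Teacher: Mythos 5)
There is a genuine gap at the step you yourself flag as delicate (your steps (3)--(4)), and it is not repairable along the lines you sketch. First, the comparison $\|\bm{x}_0+\bm{w}\|_1\le\|\bm{x}_0\|_1$ cannot yield any sign constraint on $w_i$ (hence on $c_iw_i$) off the support of $\bm{x}_0$: there $x_{0,i}=0$ and the objective contributes $|w_i|$, which is invariant under $w_i\mapsto-w_i$, so the hypothesis $c_ix_{0,i}\ge0$ gives information only on the support coordinates. Second, there is a weighting mismatch: the $j=0$ null-space row reads $\sum_i c_iw_i=0$ while the objective involves the unweighted $\sum_i|w_i|$, so the telescoping you propose only aligns when all $|c_i|$ are equal. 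Third, even after passing to the variables $z_i=c_ix_{0,i}\ge0$, $v_i=c_iw_i$ (where the $\ell_1$ comparison does force $v_i\ge0$ off the support and the first row gives $\sum_iv_i=0$), these two facts do not force $\bm{v}=\bm{0}$: a null vector that is positive off $T$ and negative on $T$ with zero sum is not excluded, so your route never reaches the ``support of size $\le 2K$'' situation where your Vandermonde argument would finish. Note also that the symmetric null-space condition you invoke at the outset genuinely fails here in general (choose $|c_i|$ large off $T$ to shrink the corresponding kernel entries), so the sign hypothesis must enter through a one-sided argument, not through that condition.

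The missing idea is the paper's explicit dual certificate, following Fuchs. Writing $\bm{y}=\bm{M}\bm{z}$ with $\bm{M}$ the $(2K+1)\times n$ Vandermonde matrix in the nodes $\lambda_i$ and $z_i=c_ix_{0,i}\ge0$ supported on $\{i_1,\dots,i_K\}$, one forms the degree-$2K$ polynomial $P(\lambda)=\prod_{k=1}^K(\lambda_{i_k}-\lambda)^2$, which is nonnegative, vanishes exactly at the support nodes, and whose $2K+1$ coefficients fill a vector $\bm{f}\in\mathbb{R}^{2K+1}$ --- this is precisely where the count $2K+1$ comes from. Then $\bm{g}=\bm{e}_1-\bm{f}$ satisfies $\left\langle\bm{g},\bm{M}_{:,i}\right\rangle=1$ on the support and $<1$ off it; the hypothesis $c_ix_{0,i}\ge0$ is used exactly so that the required value on the support is $+1$ uniformly. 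Optimality follows from this certificate, and uniqueness follows from the certificate together with the fact that any $2K$ columns of a Vandermonde matrix with distinct nodes are linearly independent: your Vandermonde observation is the correct closing step, but it must be reached via the certificate, not via sign propagation. (You are right to worry about repeated diagonal entries: the paper's own proof, like any correct one, tacitly needs the $\lambda_i$ to be distinct, not merely nonzero, since two equal nodes make the corresponding columns of $\bm{O}_{\bm{t}}$ proportional.)
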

\begin{IEEEproof}
See Section \ref{sub:ProofFuchs1}. 
\end{IEEEproof}
\vspace{0in}

We note that, one can relax the above to the case when the observations
are periodic such that $t_{2}-t_{1}=t_{3}-t_{2}=...=t_{m}-t_{m-1}$,
where $1,2,\dots,m$ are the observation times.

In the following, we consider more general settings. 
\begin{prop}
\label{RandomObservationCase} Suppose that $\bm{A}\in\mathbb{R}^{n\times n}$
is of Jordan canonical form, all diagonal entries are nonzero, and
the eigenvalues corresponding to different Jordan blocks are distinct.
Let the entries of $\bm{C}\in\mathbb{R}^{1\times n}$ ($d_{y}=1$)
be non-zero for all the leading components of Jordan blocks (that
is, for the first entry corresponding to a Jordan block). If the decoder
receives $m$ random observations, at random times $T_{m}=\{t_{1},t_{2},\dots,t_{m}\}$,
let $\bm{O}_{\bm{T}_{m}}=\left[\left(\bm{CA}^{t_{1}}\right)^{T},\left(\bm{CA}^{t_{2}}\right)^{T},\cdots,\left(\bm{CA}^{t_{m}}\right)^{T}\right]^{T}.$
Let $\bm{O}_{\bm{T}_{m}}(i)$ denote the $i^{th}$ column of $\bm{O}_{\bm{T}_{m}}$
for $1\leq i\leq n$. Define 
\[
M(\bm{T}_{m})=\sup_{i\neq j}\langle\frac{1}{||\bm{O}_{\bm{T}_{m}}(i)||_{2}}\bm{O}_{\bm{T}_{m}}(i),\frac{1}{||\bm{O}_{\bm{T}_{m}}(j)||_{2}}\bm{O}_{\bm{T}_{m}}(j)\rangle<1.
\]
 Then $\bm{x}_{0}$ can be exactly reconstructed after $m$ measurements
if: 
\[
\left\Vert \bm{x}_{0}\right\Vert _{0}\leq\frac{1}{2}(1+\frac{1}{M(\bm{T}_{m})})
\]
 by algorithms with polynomial complexity in $n$. In particular,
a linear program (LP) can be used to recover the initial state.\end{prop}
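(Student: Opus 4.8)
The plan is to show that the column-normalized version of $\bm{O}_{\bm{T}_m}$ obeys the null-space condition recalled in Section~\ref{sec:Preliminaries}, so that $\ell_1$-minimization recovers a rescaled copy of the initial state; the control-theoretic structure on $(\bm{A},\bm{C})$ enters only to make the coherence $M(\bm{T}_m)$ meaningful. First I would verify that every column of $\bm{O}_{\bm{T}_m}$ is nonzero. Since $\bm{A}$ is in Jordan form, $\bm{A}^{t}$ is block diagonal and a Jordan block with eigenvalue $\lambda$ contributes an upper-triangular block with $(k,i)$ entry $\binom{t}{i-k}\lambda^{\,t-(i-k)}$. Writing $i$ for the $r$-th index of a block whose leading index $i_0$ has $c_{i_0}\neq 0$ and whose eigenvalue is $\lambda\neq 0$, the $i$-th entry of $\bm{C}\bm{A}^{t}$ equals $\lambda^{\,t-(r-1)}P_i(t)$, where $P_i$ is a polynomial in $t$ of degree $r-1$ with leading coefficient $c_{i_0}/(r-1)!\neq 0$; hence $P_i\not\equiv 0$ and the $i$-th column of $\bm{O}_{\bm{T}_m}$, which stacks $\lambda^{\,t_\ell-(r-1)}P_i(t_\ell)$ over $\ell=1,\dots,m$, is not the zero vector (with probability one over the random times, and in any case under the standing hypothesis that $M(\bm{T}_m)$ is well defined and $<1$, which further needs the distinctness of the eigenvalues across blocks to prevent columns from distinct blocks being collinear). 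Then let $\bm{D}=\mathrm{diag}(\lVert\bm{O}_{\bm{T}_m}(1)\rVert_2,\dots,\lVert\bm{O}_{\bm{T}_m}(n)\rVert_2)$, invertible by the above, and $\widetilde{\bm{O}}:=\bm{O}_{\bm{T}_m}\bm{D}^{-1}$, which has unit-norm columns and mutual coherence $\mu:=M(\bm{T}_m)<1$. Since $\bm{y}_{\bm{T}_m}=\bm{O}_{\bm{T}_m}\bm{x}_0=\widetilde{\bm{O}}(\bm{D}\bm{x}_0)$ and $\bm{D}$ is diagonal and nonsingular, the vector $\bm{s}_0:=\bm{D}\bm{x}_0$ satisfies $\lVert\bm{s}_0\rVert_0=\lVert\bm{x}_0\rVert_0\le\tfrac12(1+\mu^{-1})$.

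Next I would verify the null-space condition for $\widetilde{\bm{O}}$ at sparsity level $K:=\lVert\bm{x}_0\rVert_0$. Let $\bm{w}\neq\bm{0}$ with $\widetilde{\bm{O}}\bm{w}=\bm{0}$. Pairing $\sum_i w^i\,\widetilde{\bm{O}}(i)=\bm{0}$ with $\widetilde{\bm{O}}(k)$ and using $|\langle\widetilde{\bm{O}}(i),\widetilde{\bm{O}}(k)\rangle|\le\mu$ for $i\neq k$ yields $|w^k|\le\mu\,(\lVert\bm{w}\rVert_1-|w^k|)$, i.e.\ $|w^k|\le\tfrac{\mu}{1+\mu}\lVert\bm{w}\rVert_1$ for every $k$. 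Summing over any $T$ with $|T|=K$ gives $\sum_{i\in T}|w^i|\le\tfrac{K\mu}{1+\mu}\lVert\bm{w}\rVert_1<\tfrac12\lVert\bm{w}\rVert_1\le\sum_{j\in T^c}|w^j|$, where the strict inequality is precisely $K<\tfrac12(1+\mu^{-1})$ and provides the constant $c>1$ demanded in \eqref{eq:condition-xu}. By the null-space condition, $\ell_1$-minimization over $\widetilde{\bm{O}}$ recovers every $K$-sparse vector, in particular $\bm{s}_0$, and hence $\bm{x}_0=\bm{D}^{-1}\bm{s}_0$ is recovered. (If $\tfrac12(1+\mu^{-1})$ is an integer the hypothesis should be read in its strict form, as is customary; likewise $M(\bm{T}_m)$ should be read as the usual mutual coherence $\sup_{i\neq j}|\langle\cdots\rangle|$.)

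Finally, the reconstruction is carried out by a linear program of size polynomial in $n$: the column norms $\lVert\bm{O}_{\bm{T}_m}(i)\rVert_2$ are available in closed form from the Jordan data (or computed directly, each $\bm{A}^{t_\ell}$ by repeated squaring), and $\min\lVert\bm{s}\rVert_1$ subject to $\widetilde{\bm{O}}\bm{s}=\bm{y}_{\bm{T}_m}$ is an LP; equivalently one solves the weighted program $\min\sum_i\lVert\bm{O}_{\bm{T}_m}(i)\rVert_2\,|x_i|$ subject to $\bm{O}_{\bm{T}_m}\bm{x}=\bm{y}_{\bm{T}_m}$, which is also an LP and whose unique minimizer is $\bm{x}_0$. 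I expect the main obstacle to be the first step --- proving that no column of $\bm{O}_{\bm{T}_m}$ vanishes, so that $M(\bm{T}_m)$ is defined, and justifying that random times together with the distinct-eigenvalue hypothesis keep distinct columns non-collinear so that genuinely $M(\bm{T}_m)<1$; once the coherence bound is available the recovery is immediate from the null-space condition. A secondary point to keep straight is that the pertinent LP is the column-normalized (equivalently, weighted-$\ell_1$) one rather than the plain $\min\lVert\bm{x}\rVert_1$, since $\bm{O}_{\bm{T}_m}$ does not have unit-norm columns.
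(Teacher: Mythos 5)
Your proof is correct and follows essentially the same route as the paper: normalize the columns of $\bm{O}_{\bm{T}_m}$ (equivalently, rescale $\bm{x}_0$ by the diagonal matrix of column norms) and invoke the mutual-coherence recovery condition $\left\Vert \bm{x}_{0}\right\Vert _{0}<\frac{1}{2}(1+\frac{1}{M(\bm{T}_{m})})$. The only differences are that the paper obtains this condition by citing Theorem 3 of Fuchs (2004) while you rederive it from the null-space condition of Section \ref{sec:Preliminaries}, and that you explicitly verify the step the paper glosses over, namely that the Jordan structure together with the nonzero leading entries of $\bm{C}$ forces every column of $\bm{O}_{\bm{T}_m}$ to be nonzero so that the normalization (and hence $M(\bm{T}_m)$ itself) is well defined.
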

\begin{IEEEproof}
See Section \ref{sub:Proof-of-RandomObservationCase}. 
\end{IEEEproof}
\vspace{0in}

\begin{rem}
We recall that the observability of a linear system described by the
pair $(\bm{A},\bm{C})$ can be verified by the following criterion,
known as the Hautus-Rosenbrock test: The pair is observable if and
only if for all $\lambda\in\mathbb{C}$, the matrix $\left[\left(\lambda I-\bm{A}\right)^{T},\bm{C}^{T}\right]^{T}$
is full rank. Clearly, one needs to check the rank condition only
for the eigenvalues of $\bm{A}$. It is a consequence of the above
that, if the component of $\bm{C}$ corresponding to the first entry
of a Jordan block is zero, then the corresponding component cannot
be recovered even with $n$ successive observations, since this is
a necessary condition for observability. 
\end{rem}
\vspace{0cm}

A more general case is studied in the next proposition. 
\begin{prop}
\label{pro:observability-general-deterministic}Given $\bm{A}\in\mathbb{R}^{n\times n}$,
$\bm{C}\in\mathbb{R}^{d_{y}\times n}$ and $\bm{T}_{m}=\{t_{1},\cdots,t_{m}\}$,
if $\bm{\Phi}=\bm{O}_{\bm{T}_{m}}\bm{B}$ satisfies the null-space
condition (\ref{eq:condition-xu}), then $\ell_{1}$-minimization
$\min\;\left\Vert \bm{s}\right\Vert _{1}\;\mathrm{s.t.}\;\bm{y}_{\bm{t}}=\bm{O}_{\bm{T}_{m}}\bm{B}\bm{s}$
reconstructs $\bm{s}$ and $\bm{x}_{0}=\bm{B}\bm{s}$ exactly. Suppose
that $\bm{\Phi}$ satisfies the RIP with proper parameters, both $\ell_{1}$-minimization
and SP algorithm leads to exact reconstruction of the initial state
$\bm{x}_{0}$. 
\end{prop}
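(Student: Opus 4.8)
\emph{Proof proposal.} The plan is to reduce the statement directly to the compressive-sensing results recalled in Section~\ref{sec:Preliminaries}, since after a change of variables the observability problem with a $\bm{B}$-sparse initial state is exactly a sparse-recovery problem with measurement matrix $\bm{\Phi}=\bm{O}_{\bm{T}_{m}}\bm{B}$.

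First I would set up the measurement model. Because $\bm{x}_{0}$ is $K$-sparse under the orthonormal basis $\bm{B}$, there is a unique $\bm{s}=\bm{B}^{T}\bm{x}_{0}\in\mathbb{R}^{n}$ with $\left\Vert \bm{s}\right\Vert _{0}\le K$ and $\bm{x}_{0}=\bm{B}\bm{s}$. Substituting into $\bm{y}_{\bm{t}}=\bm{O}_{\bm{T}_{m}}\bm{x}_{0}$ gives $\bm{y}_{\bm{t}}=\bm{O}_{\bm{T}_{m}}\bm{B}\bm{s}=\bm{\Phi}\bm{s}$, so recovering $\bm{x}_{0}$ is equivalent to recovering the $K$-sparse vector $\bm{s}$ from the underdetermined linear system $\bm{y}_{\bm{t}}=\bm{\Phi}\bm{s}$; moreover $\bm{B}$ being orthonormal makes $\bm{s}\mapsto\bm{x}_{0}=\bm{B}\bm{s}$ a bijection, so uniqueness for $\bm{s}$ transfers to uniqueness for $\bm{x}_{0}$.

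For the $\ell_{1}$ part I would invoke the null-space condition theorem recalled above: since $\bm{\Phi}=\bm{O}_{\bm{T}_{m}}\bm{B}$ satisfies (\ref{eq:condition-xu}) for every $K$-subset $T$, that theorem states that the program $\min\left\Vert \bm{s}\right\Vert _{1}$ s.t. $\bm{y}_{\bm{t}}=\bm{\Phi}\bm{s}$ has the true $\bm{s}$ as its unique optimizer; one further multiplication by $\bm{B}$ then returns $\bm{x}_{0}$. For the RIP part I would cite the recovery guarantees \cite{Candes_Tao_IT2005_decoding_linear_programming,Candes_Tao_ApplMath2006_Stable_Signal_Recovery,Dai_2008_Subspace_Pursuit} quoted in the preliminaries: if $\bm{\Phi}$ has RIP parameter $\delta_{kK}\le c_{0}$ for the relevant algorithm-dependent $(k,c_{0})$, then both $\ell_{1}$-minimization and SP return the true $K$-sparse $\bm{s}$, hence the true $\bm{x}_{0}$.

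Since every ingredient is already in place, I do not expect a genuine obstacle; the proof is essentially the observation that $\bm{O}_{\bm{T}_{m}}\bm{B}$ plays the role of a compressive-sensing measurement matrix. The only points deserving care are the bookkeeping of the change of basis --- that $\left\Vert \bm{s}\right\Vert _{0}$ equals the $\bm{B}$-sparsity level of $\bm{x}_{0}$ and that orthonormality of $\bm{B}$ preserves uniqueness (and, if one wanted a noisy or RIP-constant version, the relevant $\ell_{2}$ norms) --- and phrasing the RIP hypothesis with the algorithm-specific constants rather than a single universal threshold.
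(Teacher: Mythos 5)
Your proposal is correct and matches the paper exactly: the paper offers no separate proof, stating only that the proposition "is a direct application of the results presented in Section \ref{sec:Preliminaries}," which is precisely the reduction you carry out via $\bm{s}=\bm{B}^{T}\bm{x}_{0}$ and the null-space/RIP guarantees. Your write-up simply makes the change-of-basis bookkeeping explicit.
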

\vspace{0in}

This proposition is a direct application of the results presented
in Section \ref{sec:Preliminaries}. This result implies a protocol
in which one keeps collecting available observations $\bm{y}_{t_{1}},\bm{y}_{t_{2}},\cdots$
until the null-space or RIP condition is satisfied. However, the computation
complexity of verifying either of them generally increases exponentially
with $n$. There are two approaches to avoid this extremely expensive
computational cost. The first approach is reconstruction on the fly
by trying to reconstruct the unknown initial state $\bm{x}_{0}$ every
time when certain number of new observations are received; and continue
this process until the reconstruction is good enough. In the second
approach, certain suboptimal but computationally more efficient conditions,
for example, the incoherence condition, are employed to judge whether
current observations are sufficient for reconstruction.

\section{\label{Sec:Stochastic}The Stochastic Model}

In this section, we discuss a stochastic model for the system matrices.
One advantage of the stochastic model is that it helps in understanding
more general cases that are difficult to analyze using the deterministic
model. Examples include Theorem \ref{thm:observability-general} and
Corollary \ref{cor:observability-Gaussian}. Our analysis is based
on the concept of rotational invariance, defined in Subsection \ref{sub:Stiefel-Manifold}.
The intuition is that rotational invariance provides a rich structure
to {}``mix'' the non-zeros in the initial state and this {}``mixing''
ensures an observability with significantly reduced number of measurements.

During the preparation of this paper, we noticed that the stochastic
model was also discussed in an independent work \cite{WakinCDC2010}.
The major differences between our approach and that in \cite{WakinCDC2010}
are as follows. First, in \cite{WakinCDC2010}, the observation matrix
$\bm{C}_{k}$'s are assumed to be random Gaussian matrices. In contrast,
our model relies on rotationally invariant random matrices, which
are much more general. Second, though the work \cite{WakinCDC2010}
is targeted for general state transition matrix $\bm{A}$, the analysis
and results best suit for the $\bm{A}$ matrices with concentrated
spectrum, for example, unitary matrices. As a comparison, in our stochastic
model, we separate the rotational invariance and the spectral property
and hence the spectral property can be very much relaxed.

\subsection{\label{sub:Stiefel-Manifold}The Isotropy of Random Matrices}

To define rotational invariance, we need to define the set of rotational
matrices, often referred to as the Stiefel manifold. Formally, the
\emph{Stiefel manifold} $\mathcal{S}_{n,k}\left(\mathbb{R}\right)$
is defined as $\mathcal{S}_{n,k}\left(\mathbb{R}\right)=\left\{ \bm{U}\in\mathbb{R}^{n\times k}:\;\bm{U}^{T}\bm{U}=\bm{I}_{k}\right\} $,
where $\bm{I}_{k}$ is the $k\times k$ identity matrix. When $n=k$,
a matrix in $\mathcal{S}_{n,n}\left(\mathbb{R}\right)$ is an orthonormal
matrix and represents a rotation. A left rotation of a measurable
set $\mathcal{H}\subset\mathbb{R}^{m\times n}$ under a given rotation
represented by $\bm{A}\in\mathcal{S}_{m,m}$ is given by the set $\bm{A}\mathcal{H}=\left\{ \bm{A}\bm{H}:\;\bm{H}\in\mathcal{H}\right\} \subset\mathbb{R}^{n\times n}$.
Similarly defines the right rotation of $\mathcal{H}$ given by $\mathcal{H}\bm{B}$
for a given $\bm{B}\in\mathcal{S}_{n,n}$. An \emph{invariant}/\emph{isotropic}
probability measure $\mu_{I}$ \cite[Sections 2 and 3]{Muirhead_book82_multivariate_statistics,James1954_Normal_Multivariate_Analysis_Orthogonal_Group}
is defined by the property that for any measurable set $\mathcal{M}\subset\mathbb{R}^{m\times n}$
and rotation matrices $\bm{A}\in\mathcal{S}_{n,n}\left(\mathbb{R}\right)$
and $\bm{B}\in\mathcal{S}_{k,k}\left(\mathbb{R}\right)$, $\mu_{I}\left(\mathcal{M}\right)=\mu_{I}\left(\bm{A}\mathcal{M}\right)=\mu_{I}\left(\mathcal{M}\bm{B}\right).$
The invariant probability on the Stiefel manifold is essentially the
uniform probability measure, i.e., $\mu_{I}\left(\left\{ \bm{A}\in\mathcal{S}_{n,k}\left(\mathbb{R}\right):\;\left\Vert \bm{A}-\bm{U}\right\Vert _{F}\le\epsilon\right\} \right)$
is independent of the choice of $\bm{U}\in\mathcal{S}_{n,k}\left(\mathbb{R}\right)$.

The main results in this subsection are Lemmas \ref{lem:isotropy-stiefel}
and \ref{lem:isotropy-Gaussian-Jordan}, which show that an rotationally
invariant random matrix admits rotationally invariant matrix products
and decompositions. These results are the key for proving results
regarding observability in Subsection \ref{sub:Results-for-Stochastic-Models}. 
\begin{lem}
\label{lem:isotropy-stiefel} Let $\bm{A}\in\mathcal{S}_{n,k}\left(\mathbb{R}\right)$
be isotropically distributed. Let $\bm{B}\in\mathcal{S}_{n,n}\left(\mathbb{R}\right)$
be random. Let $\bm{C}=\bm{B}\cdot\bm{A}$. Then $\bm{C}\in\mathcal{S}_{n,k}\left(\mathbb{R}\right)$
is isotropically distributed and independent of $\bm{B}$. 
\end{lem}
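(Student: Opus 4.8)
The plan is to establish the two claims — that $\bm{C}=\bm{B}\bm{A}$ is isotropically distributed on $\mathcal{S}_{n,k}(\mathbb{R})$, and that $\bm{C}$ is independent of $\bm{B}$ — by a conditioning argument together with the defining invariance property of $\mu_I$. First I would condition on $\bm{B}=\bm{b}$ for a fixed $\bm{b}\in\mathcal{S}_{n,n}(\mathbb{R})$. Since $\bm{A}$ is independent of $\bm{B}$ (this is the tacit assumption behind ``$\bm{B}$ is random'' alongside an isotropically distributed $\bm{A}$; it should be stated explicitly), the conditional law of $\bm{C}$ given $\bm{B}=\bm{b}$ is the pushforward of $\mu_I$ under the left rotation map $\bm{H}\mapsto\bm{b}\bm{H}$. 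By the left-invariance in the definition of the isotropic measure, $\mu_I(\bm{b}\,\mathcal{M})=\mu_I(\mathcal{M})$ for every measurable $\mathcal{M}\subset\mathcal{S}_{n,k}(\mathbb{R})$, so this pushforward is again $\mu_I$. Hence the conditional distribution of $\bm{C}$ given $\bm{B}=\bm{b}$ equals $\mu_I$ for \emph{every} value $\bm{b}$; in particular it does not depend on $\bm{b}$.

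From this single computation both conclusions follow. Because the conditional law of $\bm{C}$ given $\bm{B}$ is $\mu_I$ regardless of $\bm{B}$, integrating out $\bm{B}$ shows the (unconditional) law of $\bm{C}$ is $\mu_I$, i.e. $\bm{C}\in\mathcal{S}_{n,k}(\mathbb{R})$ is isotropically distributed. And the fact that the conditional law of $\bm{C}$ given $\bm{B}=\bm{b}$ is the same for all $\bm{b}$ — equal to the marginal of $\bm{C}$ — is exactly the statement that $\bm{C}$ and $\bm{B}$ are independent. Concretely, for measurable sets $\mathcal{M}\subset\mathcal{S}_{n,k}(\mathbb{R})$ and $\mathcal{N}\subset\mathcal{S}_{n,n}(\mathbb{R})$ one writes $\Pr(\bm{C}\in\mathcal{M},\,\bm{B}\in\mathcal{N})=\int_{\mathcal{N}}\Pr(\bm{C}\in\mathcal{M}\mid\bm{B}=\bm{b})\,d\mu_{\bm{B}}(\bm{b})=\mu_I(\mathcal{M})\,\Pr(\bm{B}\in\mathcal{N})$, which factorizes.

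I would also note that $\bm{C}$ indeed lands in $\mathcal{S}_{n,k}(\mathbb{R})$: $\bm{C}^T\bm{C}=\bm{A}^T\bm{B}^T\bm{B}\bm{A}=\bm{A}^T\bm{A}=\bm{I}_k$ since $\bm{B}$ is orthogonal and $\bm{A}$ has orthonormal columns, so the pushforward statements above make sense on the right space.

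The argument is essentially routine once the framework is in place; the only real subtlety — and the thing I would be careful to flag — is the independence hypothesis between $\bm{A}$ and $\bm{B}$, which the conditioning step uses crucially. If $\bm{A}$ and $\bm{B}$ were allowed to be arbitrarily dependent the conclusion would fail, so the statement implicitly requires $\bm{A}\perp\bm{B}$; given that, no measure-theoretic obstacle remains, since left-invariance of $\mu_I$ is available directly from its definition in Subsection \ref{sub:Stiefel-Manifold}.
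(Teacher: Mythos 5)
Your proposal is correct and follows essentially the same route as the paper's own proof: condition on $\bm{B}$, use the (implicit) independence of $\bm{A}$ from $\bm{B}$ to reduce $\Pr(\bm{C}\in\mathcal{M}\mid\bm{B})$ to $\Pr(\bm{A}\in\bm{B}^{-1}\mathcal{M})$, and invoke left-invariance of $\mu_I$ to conclude this equals $\mu_I(\mathcal{M})$ for every $\bm{B}$, giving both isotropy and independence at once. Your remark that the independence of $\bm{A}$ and $\bm{B}$ is a tacit hypothesis is well taken --- the paper also uses it (at its step $(a)$) without stating it in the lemma.
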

\vspace{0in}

\begin{IEEEproof}
In order to show that $\bm{C}$ is independent of $\bm{B}$, it is
sufficient to show that for given arbitrary $\bm{B}\in\mathcal{S}_{n,n}\left(\mathbb{R}\right)$
and arbitrary measurable set $\mathcal{M}\subset\mathcal{S}_{n,k}\left(\mathbb{R}\right)$,
the conditional probability $\Pr\left(\bm{C}\in\mathcal{M}|\bm{B}\right)$
is independent of $\bm{B}$. This can be verified by observing 
\begin{align*}
\Pr\left(\bm{C}\in\mathcal{M}|\bm{B}\right)=\Pr\left(\bm{A}\in\bm{B}^{-1}\mathcal{M}|\bm{B}\right)\overset{\left(a\right)}{=}\Pr\left(\bm{A}\in\bm{B}^{-1}\mathcal{M}\right)\overset{\left(b\right)}{=}\Pr\left(\bm{A}\in\mathcal{M}\right)=\mu_{I}\left(\mathcal{M}\right),
\end{align*}
 where $\left(a\right)$ follows from the fact that $\bm{A}$ is independent
of $\bm{B}$, and $\left(b\right)$ comes from the facts that $\bm{A}$
is isotropically distributed and that $\bm{B}\in\mathcal{S}_{n,n}\left(\mathbb{R}\right)$
and hence $\bm{B}^{-1}=\bm{B}^{T}\in\mathcal{S}_{n,n}\left(\mathbb{R}\right)$.
This proves the lemma. 
\end{IEEEproof}
\vspace{0in}

Let $\bm{H}\in\mathbb{R}^{n\times n}$ be a standard Gaussian random
matrix, i.e., the entries of $\bm{H}$ are independent and identically
distributed Gaussian random variables with zero mean and unit variance.
Consider the Jordan matrix decomposition $\bm{H}=\bm{P}\bm{J}\bm{P}^{-1}$,
where $\bm{J}$ is often referred to as the Jordan normal form of
$\bm{H}$. Let $\bm{P}=\bm{U}_{\bm{P}}\bm{\Lambda}_{\bm{P}}\bm{V}_{\bm{P}}^{T}$
be the singular value decomposition of $\bm{P}$, where $\bm{\Lambda}_{\bm{P}}$
is the diagonal matrix composed of singular values of $\bm{P}$. Then
$\bm{P}^{-1}=\bm{V}_{\bm{P}}\bm{\Lambda}_{\bm{P}}^{-1}\bm{U}_{\bm{P}}^{T}$.
The following lemma states that the orthogonal matrix $\bm{U}_{\bm{P}}$
is isotropically distributed. 
\begin{lem}
\label{lem:isotropy-Gaussian-Jordan}Let $\bm{H}\in\mathbb{R}^{n\times n}$
be a standard Gaussian random matrix, let $\bm{H}=\bm{P}\bm{J}\bm{P}^{-1}$
be the corresponding Jordan matrix decomposition, and let $\bm{P}=\bm{U}_{\bm{P}}\bm{\Lambda}_{\bm{P}}\bm{V}_{\bm{P}}^{T}$
be the singular value decomposition of $\bm{P}$. Then $\bm{U}_{\bm{P}}\in\mathcal{S}_{n,n}\left(\mathbb{R}\right)$
is isotropically distributed and independent of $\bm{J}$, $\bm{\Lambda}_{\bm{P}}$
and $\bm{V}_{\bm{P}}$. 
\end{lem}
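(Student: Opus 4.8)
The plan is to exhibit the isotropy of $\bm{U}_{\bm{P}}$ by using the rotational invariance of the Gaussian ensemble itself: a standard Gaussian matrix $\bm{H}$ satisfies $\bm{Q}\bm{H}\bm{Q}^{T}\overset{d}{=}\bm{H}$ for every fixed $\bm{Q}\in\mathcal{S}_{n,n}\left(\mathbb{R}\right)$, and, more to the point, we may randomize $\bm{Q}$: if $\bm{Q}$ is drawn from the invariant measure on $\mathcal{S}_{n,n}\left(\mathbb{R}\right)$ independently of $\bm{H}$, then $\bm{Q}\bm{H}\bm{Q}^{T}\overset{d}{=}\bm{H}$ as well, and $\bm{Q}$ remains invariantly distributed conditioned on $\bm{Q}\bm{H}\bm{Q}^{T}$. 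First I would observe that if $\bm{H}=\bm{P}\bm{J}\bm{P}^{-1}$ is a Jordan decomposition, then $\bm{Q}\bm{H}\bm{Q}^{T}=(\bm{Q}\bm{P})\bm{J}(\bm{Q}\bm{P})^{-1}$ is a Jordan decomposition of $\bm{Q}\bm{H}\bm{Q}^{T}$ \emph{with the same Jordan form} $\bm{J}$ (since similar matrices have the same Jordan form, up to the block-ordering convention which we fix once and for all). Thus conjugating by $\bm{Q}$ leaves $\bm{J}$ untouched and replaces $\bm{P}$ by $\bm{Q}\bm{P}$.

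Next I would turn to the SVD. Write $\bm{P}=\bm{U}_{\bm{P}}\bm{\Lambda}_{\bm{P}}\bm{V}_{\bm{P}}^{T}$; then $\bm{Q}\bm{P}=(\bm{Q}\bm{U}_{\bm{P}})\bm{\Lambda}_{\bm{P}}\bm{V}_{\bm{P}}^{T}$ is an SVD of $\bm{Q}\bm{P}$ (again fixing a deterministic tie-breaking rule so that the SVD is a measurable function of its argument, and noting $\bm{Q}\bm{U}_{\bm{P}}\in\mathcal{S}_{n,n}\left(\mathbb{R}\right)$ since $\bm{Q}$ is orthogonal). Hence $\bm{\Lambda}_{\bm{P}}$ and $\bm{V}_{\bm{P}}$ are also unchanged, while $\bm{U}_{\bm{P}}$ is replaced by $\bm{Q}\bm{U}_{\bm{P}}$. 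Let me denote by $(\bm{U}_{\bm{P}},\bm{\Lambda}_{\bm{P}},\bm{V}_{\bm{P}},\bm{J})$ the quadruple extracted from $\bm{H}$ by these fixed measurable maps, and by $(\widetilde{\bm{U}},\widetilde{\bm{\Lambda}},\widetilde{\bm{V}},\widetilde{\bm{J}})$ the quadruple extracted from $\bm{Q}\bm{H}\bm{Q}^{T}$. The computation above shows $\widetilde{\bm{U}}=\bm{Q}\bm{U}_{\bm{P}}$, $\widetilde{\bm{\Lambda}}=\bm{\Lambda}_{\bm{P}}$, $\widetilde{\bm{V}}=\bm{V}_{\bm{P}}$, $\widetilde{\bm{J}}=\bm{J}$. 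Since $\bm{Q}\bm{H}\bm{Q}^{T}\overset{d}{=}\bm{H}$ and this is preserved under applying a fixed measurable map, we get
\[
\left(\bm{Q}\bm{U}_{\bm{P}},\,\bm{\Lambda}_{\bm{P}},\,\bm{V}_{\bm{P}},\,\bm{J}\right)\overset{d}{=}\left(\bm{U}_{\bm{P}},\,\bm{\Lambda}_{\bm{P}},\,\bm{V}_{\bm{P}},\,\bm{J}\right)
\]
for every fixed $\bm{Q}\in\mathcal{S}_{n,n}\left(\mathbb{R}\right)$. This joint-distribution identity is exactly the statement that, conditioned on $(\bm{\Lambda}_{\bm{P}},\bm{V}_{\bm{P}},\bm{J})$, the matrix $\bm{U}_{\bm{P}}$ is invariant under left multiplication by any orthogonal $\bm{Q}$; a standard fact (uniqueness of Haar/invariant measure on a compact group, applied to the compact orthogonal group acting on $\mathcal{S}_{n,n}\left(\mathbb{R}\right)$) then forces the conditional law of $\bm{U}_{\bm{P}}$ to be the invariant measure $\mu_{I}$, independently of the conditioning variables. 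In particular $\bm{U}_{\bm{P}}$ is isotropically distributed and independent of $\bm{J}$, $\bm{\Lambda}_{\bm{P}}$ and $\bm{V}_{\bm{P}}$, which is the claim. (Alternatively, one can invoke Lemma \ref{lem:isotropy-stiefel} with the roles of $\bm{A}$ and $\bm{B}$ suitably identified once isotropy of $\bm{U}_{\bm{P}}$ is in hand, to package the independence assertion, but the argument above already yields it directly.)

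The main obstacle — and the point that needs care rather than cleverness — is \emph{measurability and well-definedness} of the maps $\bm{H}\mapsto(\bm{P},\bm{J})$ and $\bm{P}\mapsto(\bm{U}_{\bm{P}},\bm{\Lambda}_{\bm{P}},\bm{V}_{\bm{P}})$. Neither the Jordan decomposition nor the SVD is unique, so one must fix deterministic selection conventions (ordering of Jordan blocks and of eigenvalues within them; ordering of singular values; and, when $\bm{P}$ has repeated singular values or $\bm{J}$ has repeated eigenvalues, a choice of basis in the offending eigenspaces) under which these become Borel-measurable functions, and then verify that these same conventions are consistent under the substitution $\bm{H}\rightsquigarrow\bm{Q}\bm{H}\bm{Q}^{T}$, i.e. that $\bm{Q}$ acts on the selected quantities exactly as written above. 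Here one uses that $\bm{H}$ is Gaussian, hence almost surely has $n$ distinct eigenvalues (so $\bm{J}$ is diagonal and the eigenbasis $\bm{P}$ is unique up to column scaling) and $\bm{P}$ almost surely has distinct singular values — so on a set of full measure the only genuine ambiguity is the sign/scaling of columns, which is harmless for the left-multiplication argument. Once these conventions are pinned down on a full-measure set, the distributional identities above are immediate and the invariant-measure uniqueness argument closes the proof.
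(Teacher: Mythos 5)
Your argument is essentially the paper's own: both proofs rest on the conjugation invariance $\bm{Q}\bm{H}\bm{Q}^{T}\overset{d}{=}\bm{H}$ of the Gaussian ensemble, the observation that conjugation by $\bm{Q}$ sends $\bm{U}_{\bm{P}}$ to $\bm{Q}\bm{U}_{\bm{P}}$ while fixing $\bm{J}$, $\bm{\Lambda}_{\bm{P}}$ and $\bm{V}_{\bm{P}}$, and the characterization of the invariant measure on $\mathcal{S}_{n,n}\left(\mathbb{R}\right)$ by left-rotation invariance. Your version is if anything more careful than the paper's, since you state the joint distributional identity explicitly and flag the selection/measurability issues in the Jordan and singular value decompositions, which the paper passes over in silence.
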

\vspace{0.02in}

\begin{IEEEproof}
According to the statement of this lemma, $\bm{H}$ is a standard
Gaussian random matrix. Hence, the distribution of $\bm{H}$ is left
and right rotationally invariant \cite[pg. 37]{James1964_distributions_random_matrices,Edelman1989_phd_thesis}.
That is, for measurable sets $\mathcal{H}\subset\mathbb{R}^{n\times n}$
and arbitrary $\bm{Q}\in\mathcal{S}_{n,n}\left(\mathbb{R}\right)$,
$\Pr\left(\bm{H}\in\mathcal{H}\right)=\Pr\left(\bm{H}\in\bm{Q}\mathcal{H}\right)=\Pr\left(\bm{H}\in\mathcal{H}\bm{Q}\right),$
and therefore, $\Pr\left(\bm{H}\in\mathcal{H}\right)=\Pr\left(\bm{H}\in\bm{Q}\mathcal{H}\bm{Q}^{T}\right).$
To simplify the notation, let $\bm{H}=\bm{U}_{\bm{P}}\bm{B}\bm{U}_{\bm{P}}^{T}$,
where $\bm{B}=\bm{\Lambda}_{\bm{P}}\bm{V}_{\bm{P}}^{T}\bm{J}\bm{V}_{\bm{P}}\bm{\Lambda}_{\bm{P}}^{-1}$.
Let $\mathcal{U}\subset\mathcal{S}_{n,n}\left(\mathbb{R}\right)$
be an arbitrary measurable set of $\bm{U}_{\bm{P}}$ . Let $\Pr\left(\mathcal{U}\right)$
be the probability measure of $\bm{U}_{\bm{P}}$ induced from the
probability measure of $\bm{H}$.

The isotropics of $\bm{U}_{\bm{P}}$ means that $\Pr\left(\bm{U}_{\bm{P}}\in\mathcal{U}\right)=\Pr\left(\bm{U}_{\bm{P}}\in\bm{Q}\mathcal{U}\right)$
for an arbitrarily given $\bm{Q}\in\mathcal{S}_{n,n}\left(\mathbb{R}\right)$.
To reach this end, note that $\Pr\left(\bm{U}_{\bm{P}}\in\mathcal{U}\right)=\Pr\left\{ \bm{H}:\;\exists\bm{U}_{\bm{P}}\in\mathcal{U}\;\mbox{s.t.}\;\bm{H}=\bm{U}_{\bm{P}}\bm{B}\bm{U}_{\bm{P}}^{T}\right\} ,$
and 
\begin{align*}
 & \Pr\left(\bm{U}_{\bm{P}}^{\prime}\in\bm{Q}\mathcal{U}\right)=\Pr\left\{ \bm{H}^{\prime}:\;\exists\bm{U}_{\bm{P}}^{\prime}\in\bm{Q}\mathcal{U}\;\mbox{s.t.}\;\bm{H}^{\prime}=\bm{U}_{\bm{P}}^{\prime}\bm{B}\bm{U}_{\bm{P}}^{\prime T}\right\} \\
 & =\Pr\left\{ \bm{H}^{\prime}:\;\exists\bm{U}_{\bm{P}}\in\mathcal{U}\;\mbox{s.t.}\;\bm{H}^{\prime}=\bm{Q}\left(\bm{U}_{\bm{P}}\bm{B}\bm{U}_{\bm{P}}^{T}\right)\bm{Q}^{T}\right\} .
\end{align*}
 In other words, for any $\bm{H}$ that induces a $\bm{U}_{\bm{P}}\in\mathcal{U}$,
$\bm{Q}\bm{H}\bm{Q}^{T}$ induces a $\bm{U}_{\bm{P}}\in\bm{Q}\mathcal{U}$,
and vice versa. Because we have shown $\Pr\left(\bm{H}\in\mathcal{H}\right)=\Pr\left(\bm{H}\in\bm{Q}\mathcal{H}\bm{Q}^{T}\right)$,
we conclude that $\bm{U}_{\bm{P}}$ is isotropically distributed.
Furthermore, the above argument also suggests that $\bm{U}_{\bm{P}}$
is independent of the matrix $\bm{B}$, therefore independent of $\bm{J}$,
$\bm{\Lambda}_{\bm{P}}$ and $\bm{V}_{\bm{P}}$. This lemma is proved. \end{IEEEproof}
\begin{rem}
\label{rem:isotropy-Jordan-general}Although Lemma \ref{lem:isotropy-Gaussian-Jordan}
only treats standard Gaussian random matrices, the same result holds
for general random matrix ensembles whose distributions are left and
right rotationally invariant: The proof of Lemma \ref{lem:isotropy-Gaussian-Jordan}
can be carried over. 
\end{rem}

\subsection{\label{sub:Results-for-Stochastic-Models}Results for Stochastic
Models}

Recall that a general linear system is observable if and only if the
observability matrix $\bm{O}_{\bm{T}_{m}}$ has full row rank. One
may expect that the row rank of $\bm{O}_{\bm{T}_{m}}$ still indicates
the observability of a linear system with sparse initial state and
partial observations. The next theorem confirms the intimate relation
between the row rank and the observability. The difference between
our results and the standard results is that the required minimum
rank is much smaller than the signal dimension $n$ in our setting. 
\begin{thm}
\label{thm:observability-general} Suppose that $\bm{A}\in\mathbb{R}^{n\times n}$
and $\bm{C}\in\mathbb{R}^{d_{y}\times n}$ are independent drawn from
a random matrix ensemble whose distribution is left and right rotationally
invariant. Let $r$ be the row rank of the overall observation matrix
$\bm{O}_{\bm{T}_{m}}$. If $r\ge O\left(K\log\frac{n}{K}\right)$,
then the $\ell_{1}$-minimization method perfectly reconstructs $\bm{x}_{0}$
from $\bm{y}_{\bm{t}}=\bm{O}_{\bm{t}}\bm{x}_{0}$ (where we write
$\bm{t}=\bm{T}_{m}$ for notational convenience) with high probability
(at least $1-e^{-nc}$ for some positive constant $c$ independent
of $n$ and $r$). 
\end{thm}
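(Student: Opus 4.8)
The plan is to reduce, via the rotational‑invariance results of Section~\ref{sub:Stiefel-Manifold}, to the Rudelson--Vershynin–type guarantee recalled at the end of Section~\ref{sec:Preliminaries}: if $\bm{\Phi}\in\mathbb{R}^{r\times n}$ has $\bm{\Phi}^{T}$ isotropically distributed on $\mathcal{S}_{n,r}(\mathbb{R})$ and $r\ge C\,K\log(n/K)$, then with probability at least $1-e^{-nc}$, $\ell_{1}$-minimization recovers every $K$-sparse vector from its image under $\bm{\Phi}$ (in particular it recovers the fixed sparse initial state $\bm{x}_{0}$ from $\bm{\Phi}\bm{x}_{0}$). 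The reduction rests on two facts. First, the program $\min\|\bm{x}\|_{1}$ s.t.\ $\bm{O}_{\bm{t}}\bm{x}=\bm{y}_{\bm{t}}$ depends on $\bm{O}_{\bm{T}_{m}}$ only through its null space $(\operatorname{rowspace}\bm{O}_{\bm{T}_{m}})^{\perp}$, hence only through its row space; so if $\bm{\Phi}$ has the same row space as $\bm{O}_{\bm{T}_{m}}$, the two $\ell_{1}$ programs have the same feasible set $\bm{x}_{0}+\ker\bm{O}_{\bm{T}_{m}}$, hence the same (unique) minimizer. Second --- the crux --- the row space of $\bm{O}_{\bm{T}_{m}}$ is, conditionally on its dimension $r$, uniformly distributed on the Grassmannian of $r$-planes in $\mathbb{R}^{n}$, so an orthonormal‑row representative of it is exactly a $\bm{\Phi}$ of the required kind.

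To establish the second fact I would decompose $\bm{O}_{\bm{T}_{m}}$ structurally. Write the Jordan decomposition $\bm{A}=\bm{P}\bm{J}\bm{P}^{-1}$ and the SVD $\bm{P}=\bm{U}_{\bm{P}}\bm{\Lambda}_{\bm{P}}\bm{V}_{\bm{P}}^{T}$; then $\bm{A}^{t_{j}}=\bm{U}_{\bm{P}}\bm{M}_{j}\bm{U}_{\bm{P}}^{T}$ with $\bm{M}_{j}:=\bm{\Lambda}_{\bm{P}}\bm{V}_{\bm{P}}^{T}\bm{J}^{t_{j}}\bm{V}_{\bm{P}}\bm{\Lambda}_{\bm{P}}^{-1}$ a function of $(\bm{J},\bm{\Lambda}_{\bm{P}},\bm{V}_{\bm{P}},t_{j})$ only. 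The decisive point is that the \emph{same} orthogonal factor $\bm{U}_{\bm{P}}$ occurs at every observation time, so stacking $\bm{C}\bm{A}^{t_{j}}=\bm{C}\bm{U}_{\bm{P}}\bm{M}_{j}\bm{U}_{\bm{P}}^{T}$ yields
\[
\bm{O}_{\bm{T}_{m}}=\bm{G}\,\bm{U}_{\bm{P}}^{T},\qquad\bm{G}:=\bigl[(\bm{C}\bm{U}_{\bm{P}}\bm{M}_{1})^{T},\dots,(\bm{C}\bm{U}_{\bm{P}}\bm{M}_{m})^{T}\bigr]^{T}.
\]
By Lemma~\ref{lem:isotropy-Gaussian-Jordan} together with Remark~\ref{rem:isotropy-Jordan-general}, $\bm{U}_{\bm{P}}$ is isotropic on $\mathcal{S}_{n,n}(\mathbb{R})$ and independent of $(\bm{J},\bm{\Lambda}_{\bm{P}},\bm{V}_{\bm{P}})$; and since $\bm{C}$ is independent of $\bm{A}$ with right‑rotationally invariant law, a conditioning argument exactly like those in the proofs of Lemmas~\ref{lem:isotropy-stiefel} and~\ref{lem:isotropy-Gaussian-Jordan} shows that $\bm{C}\bm{U}_{\bm{P}}\overset{d}{=}\bm{C}$ and that $\bm{C}\bm{U}_{\bm{P}}$, jointly with $(\bm{J},\bm{\Lambda}_{\bm{P}},\bm{V}_{\bm{P}})$, is independent of $\bm{U}_{\bm{P}}$. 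Hence $\bm{G}\perp\bm{U}_{\bm{P}}$. Let $\bm{\Phi}_{0}$ be the $r\times n$ matrix obtained by Gram--Schmidt orthonormalization of the rows of $\bm{G}$ (a measurable function of $\bm{G}$, with $r=\operatorname{rank}\bm{G}=\operatorname{rank}\bm{O}_{\bm{T}_{m}}$), and set $\bm{\Phi}:=\bm{\Phi}_{0}\bm{U}_{\bm{P}}^{T}$; this $\bm{\Phi}$ has orthonormal rows and $\operatorname{rowspace}\bm{\Phi}=\operatorname{rowspace}\bm{O}_{\bm{T}_{m}}$. The event $\{\operatorname{rank}\bm{O}_{\bm{T}_{m}}=r\}$ is determined by $\bm{G}$ alone, so conditionally on it $\bm{U}_{\bm{P}}$ is still uniform on $\mathcal{S}_{n,n}(\mathbb{R})$ and independent of $\bm{\Phi}_{0}$; therefore $\bm{\Phi}^{T}=\bm{U}_{\bm{P}}\bm{\Phi}_{0}^{T}$ is isotropic on $\mathcal{S}_{n,r}(\mathbb{R})$ (the right‑multiplication counterpart of Lemma~\ref{lem:isotropy-stiefel}, e.g.\ by completing $\bm{\Phi}_{0}^{T}$ to an orthogonal matrix and invoking invariance of Haar measure under a fixed rotation, then restricting to the first $r$ columns).

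To conclude, fix $C$ and $c$ as in the Section~\ref{sec:Preliminaries} guarantee. On the event $\{r\ge C\,K\log(n/K)\}$, condition further on the value of $r$: then $\bm{\Phi}$ satisfies the hypotheses of that guarantee, so with conditional probability at least $1-e^{-nc}$ the $\ell_{1}$ program for $\bm{\Phi}$ recovers $\bm{x}_{0}$, hence --- by the equivalence of the two programs established in the first paragraph --- so does $\min\|\bm{x}\|_{1}$ s.t.\ $\bm{O}_{\bm{t}}\bm{x}=\bm{y}_{\bm{t}}$. Averaging over the admissible values of $r$ preserves the bound $1-e^{-nc}$.

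The step I expect to be the main obstacle is the structural decomposition: one must verify not only that every power $\bm{A}^{t_{j}}$ carries the \emph{common} isotropic orthogonal factor $\bm{U}_{\bm{P}}$ (this is where Lemma~\ref{lem:isotropy-Gaussian-Jordan} enters), but that after absorbing $\bm{C}\bm{U}_{\bm{P}}$ via the right‑invariance of $\bm{C}$, the residual block $\bm{G}$ is genuinely \emph{independent} of $\bm{U}_{\bm{P}}$, jointly with the rank event used to de‑randomize $r$. Minor technical points are the measurable selection of an orthonormal basis of $\operatorname{rowspace}\bm{G}$ and, if one prefers to stay over $\mathbb{R}$, using the real Schur form $\bm{A}=\bm{Q}\bm{S}\bm{Q}^{T}$ in place of the (possibly complex) Jordan decomposition, the analogue of Lemma~\ref{lem:isotropy-Gaussian-Jordan} giving the same conclusion with the real orthogonal $\bm{Q}$ in the role of $\bm{U}_{\bm{P}}$.
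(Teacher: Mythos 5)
Your proposal is correct and follows essentially the same route as the paper: the same decomposition $\bm{O}_{\bm{t}}=\tilde{\bm{O}}_{\bm{t}}\bm{U}_{\bm{P}}^{T}$ with $\bm{U}_{\bm{P}}$ isotropic and independent of the remaining factor (the content of Lemmas \ref{lem:isotropy-stiefel}--\ref{lem:isotropy-Gaussian-Jordan} and \ref{lem:observability-general}), followed by reduction to the isotropic-measurement $\ell_{1}$ guarantee of \cite{Rudelson2005_CS_error_correcting_codes}. The only difference is cosmetic --- you replace $\bm{O}_{\bm{t}}$ by an orthonormal-row matrix with the same row space, whereas the paper passes through the SVD of $\bm{O}_{\bm{t}}$ and compresses $\bm{y}_{\bm{t}}$ to an $r$-dimensional measurement --- and your explicit conditioning on the rank event and the remark about using the real Schur form are, if anything, more careful than the paper's treatment.
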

\vspace{0.01in}

The proof of Theorem \ref{thm:observability-general} rests on the
following Lemma. 
\begin{lem}
\label{lem:observability-general} Assume the same set-ups as in Theorem
\ref{thm:observability-general} and let $\bm{t}=\bm{T}_{m}$ for
notational convenience. Let $\bm{O}_{\bm{t}}=\bm{U}_{\bm{t}}\bm{\Lambda}_{\bm{t}}\bm{V}_{\bm{t}}^{T}$
be the corresponding singular value decomposition, where $\bm{U}_{\bm{t}}\in\mathcal{S}_{md_{y},md_{y}}\left(\mathbb{R}\right)$,
$\bm{V}_{\bm{t}}\in\mathcal{S}_{n,n}\left(\mathbb{R}\right)$ are
the left and right singular vector matrices respectively. Then $\bm{V}_{\bm{t}}$
is isotropically distributed and independent of $\bm{U}_{\bm{t}}$
and $\bm{\Lambda}_{\bm{t}}$. 
\end{lem}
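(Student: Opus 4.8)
The plan is to strip an isotropic orthogonal factor off the right of $\bm{O}_{\bm{t}}$ by exploiting the rotational invariance of $\bm{A}$, and then to absorb that factor into the right singular vectors. First I would invoke the Jordan decomposition $\bm{A}=\bm{P}\bm{J}\bm{P}^{-1}$ and the singular value decomposition $\bm{P}=\bm{U}_{\bm{P}}\bm{\Lambda}_{\bm{P}}\bm{V}_{\bm{P}}^{T}$, so that $\bm{P}^{-1}=\bm{V}_{\bm{P}}\bm{\Lambda}_{\bm{P}}^{-1}\bm{U}_{\bm{P}}^{T}$ and hence, for each sampling time $t_{i}$,
\[
\bm{C}\bm{A}^{t_{i}}=\bm{C}\bm{U}_{\bm{P}}\bm{\Lambda}_{\bm{P}}\bm{V}_{\bm{P}}^{T}\bm{J}^{t_{i}}\bm{V}_{\bm{P}}\bm{\Lambda}_{\bm{P}}^{-1}\bm{U}_{\bm{P}}^{T}.
\]
Stacking these blocks over $i=1,\dots,m$ yields a factorization $\bm{O}_{\bm{t}}=\bm{G}\,\bm{U}_{\bm{P}}^{T}$, where $\bm{G}\in\mathbb{R}^{md_{y}\times n}$ has $i$-th block row $\bm{C}\bm{U}_{\bm{P}}\bm{\Lambda}_{\bm{P}}\bm{V}_{\bm{P}}^{T}\bm{J}^{t_{i}}\bm{V}_{\bm{P}}\bm{\Lambda}_{\bm{P}}^{-1}$; in particular $\bm{G}$ is a deterministic function of $\bm{C}\bm{U}_{\bm{P}}$ and of the triple $(\bm{J},\bm{\Lambda}_{\bm{P}},\bm{V}_{\bm{P}})$.

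The next step is to show $\bm{G}$ is independent of $\bm{U}_{\bm{P}}$. Since $\bm{A}$ is drawn from a left- and right-rotationally invariant ensemble, Lemma~\ref{lem:isotropy-Gaussian-Jordan} together with Remark~\ref{rem:isotropy-Jordan-general} gives that $\bm{U}_{\bm{P}}$ is isotropically distributed and independent of $(\bm{J},\bm{\Lambda}_{\bm{P}},\bm{V}_{\bm{P}})$. Because $\bm{C}$ is drawn independently of $\bm{A}$, it is independent of the whole tuple $(\bm{U}_{\bm{P}},\bm{J},\bm{\Lambda}_{\bm{P}},\bm{V}_{\bm{P}})$. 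Conditioning on that tuple and using the right rotational invariance of $\bm{C}$ --- the same short computation as in the proof of Lemma~\ref{lem:isotropy-stiefel} --- shows that the conditional law of $\bm{C}\bm{U}_{\bm{P}}$ equals the unconditional law of $\bm{C}$, so $\bm{C}\bm{U}_{\bm{P}}$ is jointly independent of $(\bm{U}_{\bm{P}},\bm{J},\bm{\Lambda}_{\bm{P}},\bm{V}_{\bm{P}})$. Combined with the independence of $\bm{U}_{\bm{P}}$ and $(\bm{J},\bm{\Lambda}_{\bm{P}},\bm{V}_{\bm{P}})$, this yields that the pair $\bigl(\bm{C}\bm{U}_{\bm{P}},(\bm{J},\bm{\Lambda}_{\bm{P}},\bm{V}_{\bm{P}})\bigr)$ is independent of $\bm{U}_{\bm{P}}$, hence so is $\bm{G}$.

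Now I would fix a measurable choice of singular value decomposition $\bm{G}=\bm{U}_{\bm{G}}\bm{\Lambda}_{\bm{G}}\bm{V}_{\bm{G}}^{T}$ with $\bm{U}_{\bm{G}}\in\mathcal{S}_{md_{y},md_{y}}(\mathbb{R})$ and $\bm{V}_{\bm{G}}\in\mathcal{S}_{n,n}(\mathbb{R})$. Then $\bm{O}_{\bm{t}}=\bm{U}_{\bm{G}}\bm{\Lambda}_{\bm{G}}\,(\bm{U}_{\bm{P}}\bm{V}_{\bm{G}})^{T}$ is a singular value decomposition of $\bm{O}_{\bm{t}}$, so we may take $\bm{U}_{\bm{t}}=\bm{U}_{\bm{G}}$, $\bm{\Lambda}_{\bm{t}}=\bm{\Lambda}_{\bm{G}}$, and $\bm{V}_{\bm{t}}=\bm{U}_{\bm{P}}\bm{V}_{\bm{G}}$. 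Finally, condition on $\bm{G}$: since $\bm{U}_{\bm{P}}$ is isotropic and independent of $\bm{G}$, its conditional law is still the invariant measure on $\mathcal{S}_{n,n}(\mathbb{R})$, and right-multiplying an isotropic matrix by the then-fixed orthogonal matrix $\bm{V}_{\bm{G}}$ leaves it isotropic --- this is precisely the invariance property $\mu_{I}(\mathcal{M})=\mu_{I}(\mathcal{M}\bm{V}_{\bm{G}})$, equivalently a transposed instance of Lemma~\ref{lem:isotropy-stiefel}. Hence the conditional law of $\bm{V}_{\bm{t}}=\bm{U}_{\bm{P}}\bm{V}_{\bm{G}}$ given $\bm{G}$ is the invariant measure, independent of the value of $\bm{G}$; therefore $\bm{V}_{\bm{t}}$ is isotropically distributed and independent of $\bm{G}$, and in particular of $(\bm{U}_{\bm{t}},\bm{\Lambda}_{\bm{t}})=(\bm{U}_{\bm{G}},\bm{\Lambda}_{\bm{G}})$, as claimed.

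The crux is the independence step in the second paragraph: one must be sure that multiplying the independent, rotationally invariant matrix $\bm{C}$ by the isotropic factor $\bm{U}_{\bm{P}}$ does not secretly re-couple the residual matrix $\bm{G}$ to $\bm{U}_{\bm{P}}$. Everything else --- the Jordan/SVD algebra and the ``isotropic times orthogonal equals isotropic'' fact --- is routine given Lemmas~\ref{lem:isotropy-stiefel} and~\ref{lem:isotropy-Gaussian-Jordan}, apart from the standard technicality of choosing the SVD factors measurably, which can be handled by working on the almost-sure event that $\bm{O}_{\bm{t}}$ has distinct nonzero singular values (true for these ensembles) or by fixing any measurable selection rule for repeated or zero singular values.
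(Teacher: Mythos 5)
Your proposal is correct and follows essentially the same route as the paper's own proof: factor $\bm{O}_{\bm{t}}=\tilde{\bm{O}}_{\bm{t}}\bm{U}_{\bm{P}}^{T}$ via the Jordan decomposition of $\bm{A}$ and the SVD of $\bm{P}$ (your $\bm{G}$ is exactly the paper's $\tilde{\bm{O}}_{\bm{t}}$), show $\bm{U}_{\bm{P}}$ is isotropic and independent of that factor, and then absorb $\bm{U}_{\bm{P}}$ into the right singular vectors. If anything, you are slightly more careful than the paper on the joint independence of $\bigl(\bm{C}\bm{U}_{\bm{P}},(\bm{J},\bm{\Lambda}_{\bm{P}},\bm{V}_{\bm{P}})\bigr)$ from $\bm{U}_{\bm{P}}$ and on the measurable selection of SVD factors, but these are refinements of the same argument rather than a different one.
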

\vspace{0.01in}

While Lemma \ref{lem:observability-general} is proved in Section
\ref{sub:Proof-of-Lem-observability-general}, the detailed proof
of Theorem \ref{thm:observability-general} is presented in Section
\ref{sub:Proof-of-Thm-observability-general}. The detailed reconstruction
procedure using $\ell_{1}$-minimization is explicitly presented in
the proof.

The next corollary presents a special case where the diagonal form
is involved. 
\begin{cor}
\label{cor:observability-Gaussian}Suppose that $\bm{A}\in\mathbb{R}^{n\times n}$
and $\bm{C}\in\mathbb{R}^{1\times n}$ ($d_{y}=1$) are independent
drawn from random matrix ensembles whose distribution is left and
right rotationally invariant. Suppose that the Jordan normal form
$\bm{J}=\bm{P}^{-1}\bm{A}\bm{P}$ is diagonal with distinct diagonal
entries with probability one. Then after $m\ge O\left(K\log\frac{n}{K}\right)$
measurements, the $\ell_{1}$-minimization method perfectly reconstructs
$\bm{x}_{0}$ with high probability (at least $1-e^{-nc}$ for some
positive constant $c$). 
\end{cor}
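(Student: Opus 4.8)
The plan is to obtain Corollary~\ref{cor:observability-Gaussian} as a specialization of Theorem~\ref{thm:observability-general}. Because $\bm{A}$ and $\bm{C}$ are drawn independently from left- and right-rotationally invariant ensembles, the hypotheses of Theorem~\ref{thm:observability-general} already hold, and the only thing left is to replace the rank condition $r\ge O(K\log(n/K))$ appearing there by the measurement-count condition $m\ge O(K\log(n/K))$ stated here. Thus it suffices to prove that, in the diagonalizable setting, the observability matrix $\bm{O}_{\bm{T}_m}\in\mathbb{R}^{m\times n}$ has row rank $r=\min(m,n)$ with probability one. I will treat $m\le n$ (the case $m\ge n$ is classical: $\bm{O}_{\bm{T}_m}$ is then a.s.\ of rank $n$, so $\bm{x}_0$ is the unique feasible point and the linear program returns it), in which case $r=m$ a.s.\ and $m\ge O(K\log(n/K))$ is exactly the rank hypothesis of Theorem~\ref{thm:observability-general}.

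To compute the rank I would diagonalize: write $\bm{A}=\bm{P}\bm{J}\bm{P}^{-1}$ with $\bm{J}=\mathrm{diag}(\lambda_1,\dots,\lambda_n)$ and put $\tilde{\bm{c}}:=\bm{C}\bm{P}=(\tilde c_1,\dots,\tilde c_n)$, so that $\bm{C}\bm{A}^{t_i}=\tilde{\bm{c}}\,\mathrm{diag}(\lambda_1^{t_i},\dots,\lambda_n^{t_i})\,\bm{P}^{-1}$ and
\[
\bm{O}_{\bm{T}_m}=\bm{W}\,\mathrm{diag}(\tilde c_1,\dots,\tilde c_n)\,\bm{P}^{-1},\qquad \bm{W}\in\mathbb{R}^{m\times n},\quad W_{ij}=\lambda_j^{t_i}.
\]
Here $\bm{W}$ is a generalized Vandermonde matrix in the distinct observation times $t_1<\cdots<t_m$ and the (a.s.\ distinct) eigenvalues $\lambda_j$. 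Since $\bm{P}^{-1}$ is invertible, $\mathrm{rank}(\bm{O}_{\bm{T}_m})=\mathrm{rank}\!\big(\bm{W}\,\mathrm{diag}(\tilde c_1,\dots,\tilde c_n)\big)$; and because $\bm{C}$ is rotationally invariant, independent of $\bm{A}$, and a.s.\ nonzero (the latter being necessary for observability anyway), conditioning on $\bm{A}$ the vector $\bm{C}\bm{P}$ avoids every fixed coordinate hyperplane almost surely, so no $\tilde c_j$ vanishes and $\mathrm{rank}(\bm{O}_{\bm{T}_m})=\mathrm{rank}(\bm{W})$ with probability one.

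It then remains to show $\mathrm{rank}(\bm{W})=m$ a.s. The $m\times m$ minor of $\bm{W}$ on its first $m$ columns is the generalized Vandermonde determinant $\det[\lambda_j^{t_i}]_{i,j=1}^{m}$, a polynomial in $\lambda_1,\dots,\lambda_m$ that is not identically zero (it reduces to the ordinary Vandermonde determinant $\prod_{i<j}(\lambda_j-\lambda_i)$ when $t_i=i-1$, and in general differs from it by a nonzero Schur-type factor). Hence its zero set is a proper algebraic subvariety of $\mathbb{R}^n$, of Lebesgue measure zero, and if the joint law of the eigenvalues charges no such subvariety then $\bm{W}$ has full row rank $m$ with probability one. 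Combining the two almost-sure events with Theorem~\ref{thm:observability-general} yields $r=m\ge O(K\log(n/K))$ and hence exact $\ell_1$-reconstruction with probability at least $1-e^{-nc}$; intersecting with the probability-one rank event leaves the bound unchanged.

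The step that needs real care — and which I expect to be the main obstacle — is precisely this last genericity input: the corollary's standing hypothesis only guarantees that the eigenvalues are a.s.\ distinct, which does not by itself imply that their joint law assigns zero mass to the (measure-zero but nontrivial) locus where the generalized Vandermonde determinant degenerates. I would discharge it either by deriving a mild spectral absolute-continuity property from the left-and-right rotational invariance of the $\bm{A}$-ensemble (via its polar/spectral structure) or, failing a clean general statement, by reading it into the corollary's ``with probability one'' proviso on the spectrum, exactly as the Gaussian case of Lemma~\ref{lem:isotropy-Gaussian-Jordan} satisfies it. The remaining bookkeeping — independence of $\bm{C}$ from $\bm{A}$, the $\bm{C}\neq\bm{0}$ caveat, and the split between $m\le n$ and $m>n$ — is routine.
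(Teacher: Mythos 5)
Your proposal follows essentially the same route as the paper's proof: factor $\bm{O}_{\bm{T}_m}$ through the diagonal Jordan form as a truncated Vandermonde matrix times $\mathrm{diag}\left(\bm{C}\bm{P}\right)$ times $\bm{P}^{-1}$, use the rotational invariance of $\bm{C}$ (independent of $\bm{A}$) to conclude that no entry of $\bm{C}\bm{P}$ vanishes almost surely, deduce full row rank $r=m$, and then invoke Theorem \ref{thm:observability-general}. The one step you flag as the main obstacle --- that the generalized Vandermonde matrix $\left[\lambda_j^{t_i}\right]$ with arbitrary distinct exponents $t_i$ has full row rank, which requires more than almost-surely distinct eigenvalues (e.g.\ $\lambda_1=1$, $\lambda_2=-1$, $t_1=0$, $t_2=2$ gives a singular $2\times 2$ block) --- is precisely the step the paper asserts without justification (``The matrix $\bm{J}_{V,\bm{t}}$ has full row rank''), so your caution there identifies a real gap in the published argument rather than a defect in your own.
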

\begin{proof} See Section \ref{sub:Proof-of-Cor-observability-Gaussian}.
\end{proof} \vspace{0.01in}

Acute readers may ask whether there exists a random matrix ensemble
such that the random sample $\bm{A}$ satisfies the required conditions
in Corollary \ref{cor:observability-Gaussian}. In fact, if $\bm{A}=\bm{H}\bm{H}^{T}$
where $\bm{H}\in\mathbb{R}^{n\times n}$ is a standard Gaussian random
matrix, then all the conditions required for $\bm{A}$ hold. This
corollary guarantees that blindly collecting $m\ge O\left(K\log\frac{n}{K}\right)$
observations is sufficient for perfect reconstruction with high probability.

\section{\label{sec:Proofs}Proofs}

\subsection{\label{sub:Proof-of-Pro-observability-Vandermonde}Proof of Proposition
\ref{pro:observability-Vandermonde}}

Let $\bm{A}=\mathrm{diag}\left(\bm{\lambda}\right)$ where $\bm{\lambda}=\left[\lambda_{1},\lambda_{2},\cdots,\lambda_{n}\right]^{T}$
is the vector containing the diagonal entries of $\bm{A}$. Let $c_{i}$
denote the $i^{\mathrm{th}}$ entry of the row vector $\bm{C}$. Then,
$\bm{C}\bm{A}^{t_{i}}=\left[c_{1}\lambda_{1}^{t_{i}},c_{2}\lambda_{2}^{t_{i}},\cdots,c_{n}\lambda_{n}^{t_{i}}\right]=\left[\lambda_{1}^{t_{i}},\lambda_{2}^{t_{i}},\cdots,\lambda_{n}^{t_{i}}\right]\mathrm{diag}\left(\bm{C}\right),$
where $\mathrm{diag}\left(\bm{C}\right)$ is the diagonal matrix whose
$i^{\mathrm{th}}$ diagonal entry is $c_{i}$. Hence, 
\begin{align*}
\bm{y}_{\bm{T}_{m}}=\bm{O}_{\bm{T}_{m}}\bm{x}_{0}=\underbrace{\left[\begin{array}{cccc}
\lambda_{1}^{t_{1}} & \lambda_{2}^{t_{1}} & \cdots & \lambda_{n}^{t_{1}}\\
\lambda_{1}^{t_{2}} & \lambda_{2}^{t_{2}} & \cdots & \lambda_{n}^{t_{2}}\\
\vdots & \vdots & \ddots & \vdots\\
\lambda_{1}^{t_{m}} & \lambda_{2}^{t_{m}} & \cdots & \lambda_{n}^{t_{m}}
\end{array}\right]}_{\bm{\Lambda}_{\bm{t}}}\mathrm{diag}\left(\bm{C}\right)\bm{x}_{0}.
\end{align*}
 Since all the entries of $\bm{C}$ are non-zero, $\mathrm{diag}\left(\bm{C}\right)\bm{x}_{0}$
is $K$-sparse under the natural basis. On the other hand, since $\lambda_{1},\lambda_{2},\cdots,\lambda_{n}$
are all distinct, the matrix $\bm{\Lambda}_{\bm{t}}$ is a truncation
of the full rank Vandermonde matrix \cite{Horn1991_book_matrix_analysis}.
Now according to the Reed-Solomon decoding method presented in \cite{Tarokh2007_ISIT_reed_solomon_CS}
and the corresponding proof, as long as $m\ge2K+1$, one can exactly
reconstruct $\mathrm{diag}\left(\bm{C}\right)\bm{x}_{0}$ and therefore
$\bm{x}_{0}$ from $\bm{y}_{\bm{t}}$ with the number of algebraic
operations polynomial in $n$. This proposition is therefore proved.

\subsection{\label{sub:ProofFuchs1}Proof of Proposition \ref{row}}

We first consider the case when $\bm{A}$ is diagonal. Since $\bm{A}$
is diagonal, it is of the form $\bm{A}=\mbox{diag}\left(\left[\lambda_{1},\cdots,\lambda_{n}\right]\right)$.
Furthermore, assume that $\bm{C}=\left[c_{1},\cdots,c_{n}\right]$
is a row vector. With $m$ many successive observations, we have a
linear system described by 
\begin{align*}
\bm{y}_{\bm{t}} & =\underbrace{\left[\begin{array}{cccc}
1 & 1 & \cdots & 1\\
\lambda_{1} & \lambda_{2} & \cdots & \lambda_{n}\\
\vdots & \vdots & \ddots & \vdots\\
\lambda_{1}^{m-1} & \lambda_{2}^{m-1} & \cdots & \lambda_{n}^{m-1}
\end{array}\right]}_{\bm{M}}\mathrm{diag}\left(\left[c_{1},\cdots,c_{n}\right]\right)\bm{x}_{0}.
\end{align*}
 Define $\bm{z}\in\mathbb{R}^{n}$ such that $z_{i}=c_{i}x_{0,i}\ge0$.
Then the corresponding $\ell_{1}$-minimization problem becomes 
\begin{equation}
\underset{\bm{z}}{\min}\;\left\Vert \bm{z}\right\Vert _{1}\;\mbox{subject to }\bm{y}_{t}=\bm{M}\bm{z}.\label{eq:L1-Z}
\end{equation}
 Once we solve the above optimization prolem, it is clear that $x_{0,i}=z_{i}/\left(\lambda_{i}^{t_{1}}c_{i}\right)$
where $t_{1}=0$.

For this case, we first show that the $\ell_{1}$-minimization has
a unique solution. Via duality theory, for a constrained minimization
problem of a convex function with an equality constraint, the minimization
has a unique solution if one can find a Lagrange multiplier (in the
dual space) for which the Lagrangian at the solution is locally stationary.
More specifically, let $\bm{M}_{:,i}$ be the $i^{\mbox{th}}$ column
of the matrix $\bm{M}$. Let $i_{1},\cdots,i_{K}$ be the indices
of the nonzero entries of $\bm{x}_{0}$. Clearly, $i_{1},\cdots,i_{K}$
are also the indices of the nonzero entries of the corresponding $\bm{z}=\mathrm{diag}\left(\left[\lambda_{1}^{t_{1}}c_{1},\cdots,\lambda_{n}^{t_{1}}c_{n}\right]\right)\bm{x}_{0}$.
If there exists a vector $\bm{g}\in\mathbb{R}^{m}$ so that 
\[
\begin{cases}
\left\langle \bm{g},\bm{M}_{:,i}\right\rangle =1 & \;\forall i\in\left\{ i_{1},i_{2},\cdots,i_{K}\right\} \\
\left\langle \bm{g},\bm{M}_{:,i}\right\rangle <1 & \;\forall i\notin\left\{ i_{1},i_{2},\cdots,i_{K}\right\} 
\end{cases},
\]
 then the duality theory implies that the optimization problem in
(\ref{eq:L1-Z}) has a unique minimizer that is $K$-sparse and has
nonzero entries at indices $i_{1},\cdots,i_{K}$.

In the following we construct a subdifferential which is essentially
what Fuchs constructed in \cite{Fuchs2005_ICASSP_sparsity}. Consider
a polynomial in $\lambda$ of the form $P\left(\lambda\right)=\prod_{k=1}^{K}(\lambda_{i_{k}}-\lambda)^{2}=\alpha_{0}\lambda^{2K}+\alpha_{1}\lambda^{2K-1}+\dots+\alpha_{2K}.$
It is clear that 
\[
\begin{cases}
P\left(\lambda_{i}\right)=0 & \;\forall i\in\left\{ i_{1},i_{2},\cdots,i_{K}\right\} \\
P\left(\lambda_{i}\right)>0 & \;\forall i\notin\left\{ i_{1},i_{2},\cdots,i_{K}\right\} 
\end{cases},
\]
 where the inequality holds since $\lambda_{i}$'s are distinct. Let
$\bm{f}\in\mathbb{R}^{m},\bm{f}:=\left[\alpha_{2K},\alpha_{2K-1},\cdots,\alpha_{1},\alpha_{0},0,0,\cdots,0\right]^{T}.$
It can be verified that the inner product $\left\langle \bm{f},\left[1,\lambda_{i},\cdots,\lambda_{i}^{m-1}\right]^{T}\right\rangle =\prod_{k=1}^{K}(\lambda_{i_{k}}-\lambda_{i})^{2}=P\left(\lambda_{i}\right).$
Now, define a vector $\bm{g}\in\mathbb{R}^{m}$ as $\bm{g}=\left[1,0,0,\cdots,0\right]^{T}-\bm{f}$.
Then $\left\langle \bm{g},\left[1,\lambda_{i},\cdots,\lambda_{i}^{m-1}\right]^{T}\right\rangle =1-P\left(\lambda_{i}\right)$.
The vector $\bm{g}$ is the desired Lagrange vector. Hence, the optimization
problem (\ref{eq:L1-Z}) has a unique minimizer.

What now needs to be shown is that there is a unique solution to the
original problem under the $l_{0}$ constraint. In other words, we
wish to show that there is a unique $K-$sparse $\bm{z}$ such that
$\bm{y}_{t}=\bm{M}\bm{z}$. Now, let there be another $K-$sparse
solution $\bm{z}^{\prime}$. Then, $\bm{M}(\bm{z}-\bm{z}^{\prime})=\bm{0}$.
But, since any $2K$ columns of the Vandermonde matrix $\bm{M}$ are
linearly independent, $\bm{z}-\bm{z}^{\prime}$ has to be the zero
vector. Hence, this ensures the the found $\ell_{1}$ solution is
the sought $l_{0}$ solution. \hfill{}$\diamond$

\subsection{\label{sub:Proof-of-RandomObservationCase}Proof of Proposition \ref{RandomObservationCase}}

We now discuss the result for a Jordan matrix ${\bf A}$. Observe
that 
\[
J=\begin{bmatrix}\lambda_{1} & 1 & 0\\
0 & \lambda_{1} & 1\\
0 & 0 & \lambda_{1}
\end{bmatrix}\quad\Rightarrow\quad J^{n}=\begin{bmatrix}\lambda_{1}^{n} & {n \choose 1}\lambda_{1}^{n-1} & {n \choose 2}\lambda_{1}^{n-2}\\
0 & \lambda_{1}^{n} & {n \choose 1}\lambda_{1}^{n-1}\\
0 & 0 & \lambda_{1}^{n}
\end{bmatrix}.
\]
 Thus, it follows that if $\bm{A}$ is of the diagonal form: $diag(\lambda_{1},\dots,\lambda_{n})$,
 the random observation matrix writes as: 
\[
M=\begin{bmatrix}c_{1}\lambda_{1}^{t_{1}} & c_{1}t_{1}\lambda_{1}^{t_{1}-1}+c_{2}\lambda_{1}^{t_{1}} & \cdots & c_{n}\lambda_{n}^{t_{1}}\\
\vdots & \vdots & \ddots & \vdots\\
c_{1}\lambda_{1}^{t_{m}} & c_{1}t_{m}\lambda_{1}^{t_{m}-1}+c_{2}\lambda_{1}^{t_{m}} & \cdots & c_{n}\lambda_{n}^{t_{m}}
\end{bmatrix}
\]

If $c_{1}$ is non-zero, and the entries corresponding to leading
entries of Jordan blocks are non-zero, the columns of the matrix become
linearly independent. By multiplying the initial condition with a
diagonal matrix, we can normalize the columns such that the $l_{2}$
norm of each column is equal to $1$.

The rest of the proof now follows from Theorem 3 of \cite{FuchsIT2004}.\hfill{}$\diamond$

\subsection{\label{sub:Proof-of-Lem-observability-general}Proof of Lemma \ref{lem:observability-general}}

Consider the Jordan decomposition $\bm{A}=\bm{P}\bm{J}\bm{P}^{-1}$
and the singular value decomposition $\bm{P}=\bm{U}_{\bm{P}}\bm{\Lambda}_{\bm{P}}\bm{V}_{\bm{P}}^{T}$.
It is clear that $\bm{P}^{-1}=\bm{V}_{\bm{P}}\bm{\Lambda}_{\bm{P}}^{-1}\bm{U}_{\bm{P}}^{T}$.
For notational compactness, let $\tilde{\bm{A}}=\bm{\Lambda}_{\bm{P}}\bm{V}_{\bm{P}}^{T}\bm{J}\bm{V}_{\bm{P}}\bm{\Lambda}_{\bm{P}}^{-1}$
so that $\bm{A}=\bm{U}_{\bm{P}}\tilde{\bm{A}}\bm{U}_{\bm{P}}^{T}$.
It is elementary to verify that $\bm{A}^{t_{i}}=\bm{U}_{\bm{P}}\tilde{\bm{A}}^{t_{i}}\bm{U}_{\bm{P}}^{T}$.
Hence, 
\[
\bm{O}_{\bm{t}}=\left[\begin{array}{c}
\bm{C}\bm{A}^{t_{1}}\\
\vdots\\
\bm{C}\bm{A}^{t_{m}}
\end{array}\right]=\left[\begin{array}{c}
\bm{C}\bm{U}_{\bm{P}}\tilde{\bm{A}}^{t_{1}}\bm{U}_{\bm{P}}^{T}\\
\vdots\\
\bm{C}\bm{U}_{\bm{P}}\tilde{\bm{A}}^{t_{m}}\bm{U}_{\bm{P}}^{T}
\end{array}\right].
\]

We shall show that $\bm{U}_{\bm{P}}$ is independent of both $\tilde{\bm{A}}$
and $\bm{C}\bm{U}_{\bm{P}}$. Since $\bm{A}$ is left and right rotation-invariantly
distributed, according to Remark \ref{rem:isotropy-Jordan-general},
$\bm{U}_{\bm{P}}$ is isotropically distributed and independent of
$\tilde{\bm{A}}$. In order to show that $\bm{U}_{\bm{P}}$ is independent
of $\bm{C}\bm{U}_{\bm{P}}$, we resort to the singular value decomposition
$\bm{C}=\bm{U}_{\bm{C}}\bm{\Lambda}_{\bm{C}}\bm{V}_{\bm{C}}^{T}$.
Since $\bm{C}$ is right rotation-invariantly distributed, $\bm{V}_{\bm{C}}$
is isotropically distributed. Thus $\tilde{\bm{V}}_{\bm{C}}^{T}:=\bm{V}_{\bm{C}}^{T}\bm{U}_{\bm{P}}$
is isotropically distributed and independent of $\bm{U}_{\bm{P}}$
according to Lemma \ref{lem:isotropy-stiefel}. As a result, $\bm{C}\bm{U}_{\bm{P}}=\bm{U}_{\bm{C}}\bm{\Lambda}_{\bm{C}}\tilde{\bm{V}}_{\bm{C}}^{T}$
is independent of $\bm{U}_{\bm{P}}$. Write $\bm{O}_{\bm{t}}=\tilde{\bm{O}}_{\bm{t}}\bm{U}_{\bm{P}}^{T}$,
where $\tilde{\bm{O}}_{\bm{t}}=\left[\left(\bm{C}\bm{U}_{\bm{P}}\tilde{\bm{A}}^{t_{1}}\right)^{T},\cdots,\left(\bm{C}\bm{U}_{\bm{P}}\tilde{\bm{A}}^{t_{m}}\right)^{T}\right]^{T}.$
Since $\bm{U}_{\bm{P}}$ is independent of both $\tilde{\bm{A}}$
and $\bm{C}\bm{U}_{\bm{P}}$, $\bm{U}_{\bm{P}}$ is independent of
$\tilde{\bm{O}}_{\bm{t}}$. Write the singular value decompositions
of $\bm{O}_{\bm{t}}$ and $\tilde{\bm{O}}_{\bm{t}}$ as $\bm{O}_{\bm{t}}=\bm{U}_{\bm{t}}\bm{\Lambda}_{\bm{t}}\bm{V}_{\bm{t}}^{T}$
and $\tilde{\bm{O}}_{\bm{t}}=\bm{U}_{\bm{t}}\bm{\Lambda}_{\bm{t}}\tilde{\bm{V}}_{\bm{t}}^{T}$.
Clearly $\bm{V}_{\bm{t}}=\bm{U}_{\bm{P}}\tilde{\bm{V}}_{\bm{t}}$.
Since $\bm{U}_{\bm{P}}$ is isotropically distributed and independent
of $\tilde{\bm{O}}_{\bm{t}}$, $\bm{V}_{\bm{t}}=\bm{U}_{\bm{P}}\tilde{\bm{V}}_{\bm{t}}$
is isotropically distributed and independent of both $\bm{\Lambda}_{\bm{t}}$
and $\bm{U}_{\bm{t}}$ according to Lemma \ref{lem:isotropy-stiefel}.
This completes the proof.

\subsection{\label{sub:Proof-of-Thm-observability-general}Proof of Theorem \ref{thm:observability-general}}

We transfer the considered reconstruction problem to the standard
compressive sensing reconstruction. Let $\lambda_{1},\lambda_{2},\cdots,\lambda_{r}$
be the $r$ non-zero singular values of $\bm{O}_{\bm{t}}$ and $\bm{\lambda}=\left[\lambda_{1},\lambda_{2},\cdots,\lambda_{r}\right]^{T}$.
The singular value decomposition of $\bm{O}_{\bm{t}}$ can be written
in the form 
\[
\bm{O}_{\bm{t}}=\bm{U}_{\bm{t}}\left[\begin{array}{cc}
\mathrm{diag}\left(\bm{\lambda}\right) & \bm{0}\\
\bm{0} & \bm{0}
\end{array}\right]\bm{V}_{\bm{t}}^{T},
\]
 where $\mathrm{diag}\left(\bm{\lambda}\right)$ is the diagonal matrix
generated from $\bm{\lambda}$. Note that 
\[
\bm{U}_{\bm{t}}^{T}\bm{y}_{\bm{t}}=\left[\begin{array}{cc}
\mathrm{diag}\left(\bm{\lambda}\right) & \bm{0}\\
\bm{0} & \bm{0}
\end{array}\right]\bm{V}_{\bm{t}}^{T}\bm{x}_{0}.
\]
 The $r+1,r+2,\cdots,m$ entries of $\bm{U}_{\bm{t}}^{T}\bm{y}_{\bm{t}}$
are zeros: they do not carry any information about $\bm{x}_{0}$.
Define $\tilde{\bm{y}}_{\bm{t}}$ be the vector containing the first
$r$ entries of $\bm{U}_{\bm{t}}^{T}\bm{y}_{\bm{t}}$. We have $\tilde{\bm{y}}_{\bm{t}}=\left[\begin{array}{cc}
\mathrm{diag}\left(\bm{\lambda}\right) & \bm{0}\end{array}\right]\bm{V}_{\bm{t}}^{T}\bm{x}_{0}$ and therefore 
\begin{align}
\mathrm{diag}\left(\bm{\lambda}\right)^{-1}\tilde{\bm{y}}_{\bm{t}} & =\left[\begin{array}{cc}
\bm{I}_{r} & \bm{0}\end{array}\right]\bm{V}_{\bm{t}}^{T}\bm{x}_{0}=\left[\begin{array}{cc}
\bm{I}_{r} & \bm{0}\end{array}\right]\bm{V}_{\bm{t}}^{T}\bm{B}\bm{s},\label{eq:T1-01}
\end{align}
 where $\bm{I}_{r}$ is the $r\times r$ identity matrix.

The unknown $\bm{s}$ ($K$-sparse) can be reconstructed by $\ell_{1}$-minimization
with high probability. Since $\bm{V}_{t}$ is isotropically distributed
and independent of $\bm{B}$, the matrix $\bm{V}_{\bm{t}}^{T}\bm{B}$
is isotropically distributed. The matrix $\left(\left[\begin{array}{cc}
\bm{I}_{r} & \bm{0}\end{array}\right]\bm{V}_{\bm{t}}^{T}\bm{B}\right)^{T}\in\mathcal{S}_{n,r}\left(\mathbb{R}\right)$, containing the first $r$ rows of $\bm{V}_{\bm{t}}^{T}\bm{B}$ as
columns, is therefore isotropically distributed. Provided that $r\ge O\left(K\log\left(n/K\right)\right)$,
the unknown signal $\bm{s}$ can be exactly reconstructed from $\mathrm{diag}\left(\bm{\lambda}\right)^{-1}\tilde{\bm{y}}_{\bm{t}}$
via $\ell_{1}$-minimization \cite{Rudelson2005_CS_error_correcting_codes}.
Theorem \ref{thm:observability-general} is proved. 
\begin{rem}
The reconstruction procedure involves singular value decomposition,
matrix production, and $\ell_{1}$-minimization. The numbers of algebraic
operations required for all these steps are polynomial in $n$. Hence,
the complexity of the whole reconstruction process is polynomial in
$n$. 
\end{rem}

\subsection{\label{sub:Proof-of-Cor-observability-Gaussian}Proof of Corollary
\ref{cor:observability-Gaussian}}

Since both $\bm{A}$ and $\bm{C}$ are left and right rotation-invariantly
distributed, Theorem \ref{thm:observability-general} can be applied.
Let $\bm{A}=\bm{P}\bm{J}\bm{P}^{^{-1}}$ be a Jordan decomposition.
Corollary \ref{cor:observability-Gaussian} holds if 
\[
\bm{O}_{\bm{t}}=\left[\begin{array}{c}
\bm{C}\bm{A}^{t_{1}}\\
\bm{C}\bm{A}^{t_{2}}\\
\vdots\\
\bm{C}\bm{A}^{t_{m}}
\end{array}\right]=\left[\begin{array}{c}
\bm{C}\bm{P}\bm{J}^{t_{1}}\\
\bm{C}\bm{P}\bm{J}^{t_{2}}\\
\vdots\\
\bm{C}\bm{P}\bm{J}^{t_{m}}
\end{array}\right]\bm{P}^{-1}
\]
 is full row ranked with probability one, i.e., $\mathrm{rank}\left(\bm{O}_{\bm{t}}\right)=m\ge O\left(K\log\frac{n}{K}\right)$
with probability one.

Suppose that the Jordan normal form $\bm{J}=\bm{P}^{-1}\bm{A}\bm{P}$
is diagonal. Denote the $j^{\mathrm{th}}$ diagonal entry of $\bm{J}$
by $J_{i}$. Note that 
\begin{align*}
\bm{CP}\bm{J}^{t_{i}} & =\left[\left(\bm{CP}\right)_{1}J_{1}^{t_{i}},\left(\bm{CP}\right)_{2}J_{2}^{t_{i}},\cdots,\left(\bm{CP}\right)_{n}J_{n}^{t_{i}}\right]\\
 & =\left[J_{1}^{t_{i}},J_{2}^{t_{i}},\cdots,J_{n}^{t_{i}}\right]\mathrm{diag}\left(\bm{CP}\right),
\end{align*}
 where $\mathrm{diag}\left(\bm{CP}\right)$ is the diagonal matrix
generated from the row vector $\bm{CP}$. Define 
\[
\bm{J}_{V,\bm{t}}=\left[\begin{array}{cccc}
J_{1}^{t_{1}} & J_{2}^{t_{1}} & \cdots & J_{n}^{t_{1}}\\
J_{1}^{t_{2}} & J_{2}^{t_{2}} & \cdots & J_{n}^{t_{2}}\\
\vdots & \vdots & \ddots & \vdots\\
J_{1}^{t_{m}} & J_{2}^{t_{m}} & \cdots & J_{n}^{t_{m}}
\end{array}\right].
\]
 Then $\bm{O}_{\bm{t}}=\bm{J}_{V,\bm{t}}\mathrm{diag}\left(\bm{CP}\right)\bm{P}^{-1}.$
Note that $\bm{J}_{V}$ is composed of $m$ rows of the Vandemonde
matrix 
\[
\bm{J}_{V}=\left[\begin{array}{cccc}
1 & 1 & \cdots & 1\\
J_{1} & J_{2} & \cdots & J_{n}\\
\vdots & \vdots & \ddots & \vdots\\
J_{1}^{m-1} & J_{2}^{m-1} & \cdots & J_{n}^{m-1}
\end{array}\right].
\]
 The matrix $\bm{J}_{V,\bm{t}}$ has full row rank. By definition
of $\bm{P}$, $\bm{P}^{-1}$ has full rank as well. Therefore, $\bm{O}_{t}$
has full row rank if and only if $\bm{CP}$ does not contain any zero
entries.

The fact that the row vector $\bm{CP}$ does not contain any zero
entries holds with probability one. This fact will be established
by the isotropy of $\bm{C}$. Let $\bm{P}_{\cdot,j}$ denote the $j^{\mathrm{th}}$
column of $\bm{P}$. Since $\bm{P}$ is full rank, $\bm{P}_{\cdot,j}\ne\bm{0}$
for all $j=1,2,\cdots,n$. By assumption, $\bm{C}$ is isotropically
distributed. This implies that $\bm{C}\bm{P}_{\cdot,j}\ne0$ with
probability one \cite{James1954_Normal_Multivariate_Analysis_Orthogonal_Group}.
$\bm{CP}$ is composed of finite columns. It follows that with probability
one, no entry of $\bm{CP}$ is zero.

So far, we have proved that $\bm{O}_{t}$ has full row rank with probability
one if the Jordan normal form $\bm{J}=\bm{P}^{-1}\bm{A}\bm{P}$ is
diagonal. Note that by assumption, the Jordan normal form is diagonal
with probability one. We have $\mathrm{rank}\left(\bm{O}_{\bm{t}}\right)=m\ge O\left(K\log\frac{n}{K}\right)$
with probability one. This proves this corollary.

\section{\label{sec:Conclusion}Concluding Remarks}

In this paper we obtained sufficiency conditions for the observability
of a linear system where the number of non-zeros in the initial states
is known to be less than the dimensionality of the system. The discussion
also applies to the case if certain elements have known values and
we wish to reconstruct the unknown values.

Two models were included; one is for a deterministic model and the
other for a stochastic model. We observed that a much lower number
of observations (even when the observations are randomly picked) can
be used to recover the initial condition. Furthermore, this can be
done by a linear or quadratic program.

An interesting extension of this problem is for the case when there
are some non-zero terms but terms which are known to have small magnitude,
that is a robust formulation of initial condition recovery when the
disturbance is an $l_{2}$ ball of small radius.

Compressive sensing offers new directions for design of information
structures in networked control systems. Recent work \cite{BasarACC2011}
lays out designs based on compressive sensing principles for such
systems. We believe there will be further results specific to control
systems, in particular on the inherent interaction between estimation
and control in decentralized control systems.  \bibliographystyle{ieeetr}
\bibliography{main2}

\end{document}